\newcommand{\titlestyle}[1]{\rmfamily\mdseries\upshape \color{MediumVioletRed}{{#1}}}
\titleformat{\section}{\raggedleft\rmfamily\Large\color{MediumVioletRed}}{\thesection}{1em}{}
\titleformat{\subsection}{\raggedleft\rmfamily\large\color{MediumVioletRed}}{\thesubsection}{1em}{}
\let\oldproofname=\proofname
\renewcommand{\proofname}{\mbox{\sffamily\upshape\bfseries \color{Purple}{\oldproofname}}}
\declaretheoremstyle[
    headfont=\sffamily\scshape\bfseries \color{Purple},
    bodyfont=\rmfamily\itshape, 
    postheadspace=1em,
]{mytheoremstyle}
\declaretheorem{theorem}[
    style=mytheoremstyle,
    numberwithin=,
]
\declaretheorem{definition, lemma, 
    proposition, corollary, 
    example, observation}[
        style=mytheoremstyle,
        sibling=theorem, 
]
\tikzset{algpxIndentLine/.style=thick}
\newcolumntype{L}{>{$}l<{$}} 
\newcolumntype{C}{>{$}c<{$}} 
\newcolumntype{R}{>{$}r<{$}} 
\newtheorem{obs}[theorem]{Observation}
\newcommand{\sfemph}[1]{\mbox{\sffamily\upshape\mdseries{#1}}}
\newcommand{\defemph}[1]{\mbox{\rmfamily\itshape \color{Blue} {#1}}}
\newcommand{\dom}{{\sf dom}}
\newcommand{\embrace}[1]{\lbrace{#1}\rbrace}
\newcommand{\names}{\sfemph{N}}
\newcommand{\Terms}{\sfemph{T}}
\newcommand{\groundterms}{\sfemph{G}}
\newcommand{\vars}{\sfemph{V}}
\newcommand{\ag}{\sfemph{A}}
\newcommand{\keys}{\sfemph{K}}
\newcommand{\varsof}{{\sf vars}}
\newcommand{\subterms}[1]{{\sf st}({#1})}
\newcommand{\size}[1]{\Vert{#1}\Vert}
\newcommand{\func}{{\sf f}}
\newcommand{\pairfn}[2]{({#1},{#2})}
\newcommand{\sencfn}[2]{\left\lbrace {#1}\right\rbrace_{{#2}}}
\newcommand{\pkfn}[1]{\textsf{pk}({#1})}
\newcommand{\aencfn}[2]{\left\lbrace {#1}\right\rbrace_{{#2}}}
\newcommand{\nf}[1]{{\sf nf}\left({#1}\right)}
\newcommand{\xor}{\oplus}
\newcommand{\Xor}[1]{\bigoplus{#1}}
\newcommand{\factors}{{\sf fac}}
\newcommand{\scxor}{\textsc{xor}}
\newcommand{\occ}[2]{\textsf{occ}_{{#1}}({#2})}
\newcommand{\rnrule}{{\sf r}}
\newcommand{\rnax}{{\sf ax}}
\newcommand{\rnpk}{{\sf pk}}
\newcommand{\rnpair}{{\sf pair}}
\newcommand{\rnsplit}{{\sf split}}
\newcommand{\rnsenc}{{\sf senc}}
\newcommand{\rnsdec}{{\sf sdec}}
\newcommand{\rnaenc}{{\sf aenc}}
\newcommand{\rnadec}{{\sf adec}}
\newcommand{\rnxor}{{\sf xor}}
\newcommand{\rnxorc}{{\sf xor}_{c}}
\newcommand{\rnxord}{{\sf xor}_{d}}
\newcommand{\DYderives}{\vdash_{\textit{dy}}}
\newcommand{\DYnderives}{\nvdash_{\textit{dy}}}
\newcommand{\axiomsof}{{\sf axioms}}
\newcommand{\concof}{{\sf conc}}
\newcommand{\lrof}{{\sf lr}}
\newcommand{\termsof}{{\sf terms}}
\newcommand{\secret}{{\sf secret}}
\newcommand{\onestep}{\textsf{one-step}}
\newcommand{\stname}{{\sf st}}
\newcommand{\prot}{\sfemph{Prot}}
\newcommand{\role}{\sfemph{R}}
\newcommand{\agvarsof}{{\sf vars}_{a}}
\newcommand{\intvarsof}{{\sf vars}_{i}}
\newcommand{\recsend}[2]{{#1}\!\Rightarrow\!{#2}}
\newcommand{\ses}{\mbox{\sffamily\upshape\mdseries ses}}
\newcommand{\sessions}{\mathcal{S}}
\newcommand{\runs}{\mathcal{R}}
\newcommand{\constst}{\sfemph{C}}
\newcommand{\stnonvars}{\sfemph{D}}
\newcommand{\intsub}{\sigma}
\newcommand{\vintsub}{\sigma^{*}}
\newcommand{\honsub}{\tau}
\newcommand{\intconst}{\constst\intsub}
\newcommand{\nfintconst}{\nf{\constst\intsub}}
\newcommand{\nfintnonvars}{\nf{\stnonvars\intsub}}
\newcommand{\zap}[1]{{\sf zap}\left({#1}\right)}
\begin{document}
\title{\titlestyle{Protocol insecurity with finitely many sessions and \scxor{}}}
\author{
    R Ramanujam\\ 
    Azim Premji University, Bengaluru (Visiting)\\
    \texttt{ramanujam.r@apu.edu.in}
    \and
    Vaishnavi Sundararajan\\
    Indian Institute of Technology, Delhi\\
    \texttt{vaishnavi@cse.iitd.ac.in}
    \and
    S P Suresh\\
    Chennai Mathematical Institute\\
    \texttt{spsuresh@cmi.ac.in}
}
\date{}
\maketitle
\raggedbottom
\begin{abstract}
    We present a different proof of the insecurity problem for \scxor{}, solved in~\cite{CKRT05}. Our proof uses the notion of typed terms and well-typed proofs, and removes a restriction on the class of protocols to which the~\cite{CKRT05} proof applies, by introducing a slightly different (but very natural) notion of protocols, where honest agent sends are derivable from previous receives in the same session. 
\end{abstract}

\section{Dolev-Yao with \scxor}\label{sec:xor}

We begin with a set $\names$ of names (atomic terms, with no further structure), and a set of variables $\vars$. We denote by $\ag \subseteq \names$ the set of agents, with $I \in \ag$ being the malicious intruder. The set of {keys} is denoted by $\keys \subseteq \names$. The set of terms, denoted by $\Terms$, is given by
\[
    t, t' \in \Terms\quad \Coloneq \quad x\ \mid\ m\ \mid\ \pkfn{k}\ \mid\ \pairfn{t}{t'}\ \mid\ \sencfn{t}{t'}\ \mid\ \aencfn{t}{\pkfn{k}}\ \mid\ \textstyle{\Xor{M}} 
\]
where $x \in \vars$, $m \in \names$, $k \in \keys$, $t, t' \in \Terms$, and $M$ is a multiset of terms. We write $\Xor{\embrace{t_{1}, \ldots, t_{n}}}$ in the more readable form $t_{1} \xor \cdots \xor t_{n}$. We will assume that $0$ and $\secret$ are two particular terms in $\names$. A term is \defemph{non-standard} if it is of the form $\Xor{M}$, and \defemph{standard} otherwise. We will assume that $\Xor{\{t\}} \coloneq t$ for a standard term $t$, and $\Xor{\emptyset} \coloneq 0$. 

For a multiset $M$, we define $\occ{c}{M}$ to be the number of times $c$ occurs in $M$. The \defemph{factors} of a term $t$, denoted $\factors(t)$, is a multiset defined as follows (with $\Cup$ denoting multiset union): 
\[
    \factors(t) \coloneq
    \begin{cases}
        \embrace{t} & \text{ if } t \text{ is standard} \\
        \Cup_{u \in M}\factors(u) & \text{ if } t = \Xor{M}
    \end{cases}
\]
The $\scxor$ operator is considered to be associative, commutative, and nilpotent ($x \xor x = 0$), with $0$ as the unit element. These rules can be captured in our model by defining a normal form for every term $t$, which is a canonical term from the equivalence class of $t$ in the equivalence relation defined by the $\xor$ identities. The \defemph{normal form} of $t$ -- denoted $\nf{t}$, and the \defemph{subterms} of $t$ -- denoted $\subterms{t}$, are defined as in Table~\ref{tab:nfst}. For a set of terms $X$, $\nf{X} := \embrace{\nf{t}\ \mid\ t \in X}$, and $\subterms{X} \coloneq \bigcup_{t \in X}~\subterms{t}$. A term is \defemph{normalized} if $\nf{t} = t$. The \defemph{size}\footnote{This is technically the \emph{dag-size} of $t$ (and $X$), as we count multiple occurrences of the same subterm only once.} of a term $t$ is $\size{t} \coloneq |\subterms{t}|$. For $X \subseteq \Terms$, we define $\size{X} \coloneq |\subterms{X}|$.  

\begin{table}
    \centering 
    \tabulinesep=0.8mm
    \setlength{\tabcolsep}{0.2em}
    \begin{math}
        \begin{tabu}{|c||c|c|}
            \hline
            t & \subterms{t} & \nf{t} \\ 
            \hline
            a & \{a\} & a \\
            \pkfn{k} & \{\pkfn{k}, k\} & \pkfn{k} \\
            \func(u,v) & \{\func(u,v)\} \cup \subterms{u} \cup \subterms{v} & \func(\nf{u}, \nf{v}) \\ 
            r & \{r\} \cup \textstyle{\bigcup_{u \in \factors(r)}}\subterms{u} & \textstyle{\Xor{}\left\{v\ \middle\vert\ \left({\sum_{u, \nf{u} = v}~\occ{u}{\factors(r)}}\right)\text{ is odd}\right\}} \\
            \hline
        \end{tabu}
    \end{math}
    \caption{\mbox{\sffamily\bfseries Normal forms and subterms}. $a \in \vars\cup\names$, $k \in \keys$, $\func(u,v)$ is standard, and $r$ is non-standard. In the last row, $\nf{r}$ is an $\scxor$ of a \emph{set} of terms.}
    \label{tab:nfst}
\end{table}

The set of variables appearing in $t$ is denoted by $\varsof(t)$. The set of \defemph{ground} terms, $\groundterms$, are those without variables. A substitution is a finite partial map $\sigma: \vars \to \Terms$. A substitution $\sigma$ is ground if $\sigma(x) \in \groundterms$ for all $x \in \dom(\sigma)$, and it is normalized if $\sigma(x)$ is normalized for all $x \in \dom(\sigma)$. For a substitution $\sigma$ and term $t$, we define $t\sigma$ as follows (where $x \in \dom(\sigma)$, $y \notin \dom(\sigma)$, $a \in \names$, and $\func$ is any cryptographic operator):
\[
    x\sigma \coloneq \sigma(x)  \quad\qquad y\sigma \coloneq y  \quad\qquad a\sigma \coloneq a  \quad\qquad \func(t_{1}, \ldots, t_{n})\sigma \coloneq \func(t_{1}\sigma, \ldots, t_{n}\sigma). 
\]
For $X \subseteq \Terms$, $X\sigma \coloneq \embrace{t\sigma \mid t \in X}$. The \emph{size of a substitution} $\sigma$ is $\size{\sigma} \coloneq |\subterms{\{\sigma(x) \mid x \in \dom(\sigma)\}}|$.   

\begin{table}
    \centering
    \tabulinesep=2mm
    \setlength{\tabcolsep}{0.6em}
    \begin{tabu}{|c|c|c|c|c|}
        \hline
        \begin{prooftree}
            \infer0[\rnax]{\ t\ }
        \end{prooftree}
        &
        \begin{prooftree}
            \hypo{\pairfn{t_{1}}{t_{2}}}
            \infer1[\rnsplit]{t_{i}}
        \end{prooftree}
        &
        \begin{prooftree}
            \hypo{\sencfn{u}{v}}
            \hypo{v}
            \infer2[\rnsdec]{u}
        \end{prooftree}
        & 
        \begin{prooftree}
            \hypo{\aencfn{u}{\pkfn{k}}}
            \hypo{k}
            \infer2[\rnadec]{u}
        \end{prooftree}
        &
        \multirow{2}{*}{
            \begin{prooftree}
                \hypo{t_{1}} \hypo{\ldots} \hypo{t_{n}}
                \infer3[\rnxor]{\nf{t_{1} \xor \cdots \xor t_{n}}} 
            \end{prooftree}
        }
        \\
        \cline{1-4} 
        \begin{prooftree}
            \hypo{k} 
            \infer1[\rnpk]{\pkfn{k}}
        \end{prooftree}
        &
        \begin{prooftree}
            \hypo{u}
            \hypo{v}
            \infer2[\rnpair]{\pairfn{u}{v}}
        \end{prooftree}
        &
        \begin{prooftree}
            \hypo{u}
            \hypo{v}
            \infer2[\rnsenc]{\sencfn{u}{v}}
        \end{prooftree}
        &
        \begin{prooftree}
            \hypo{u}
            \hypo{\pkfn{k}}
            \infer2[\rnaenc]{\aencfn{u}{\pkfn{k}}}
        \end{prooftree}
        &
        \\
        \hline
    \end{tabu}
    \caption{\mbox{\sffamily\bfseries Derivation rules for terms}. The terms above the horizontal line in each rule are \defemph{premises} and the one below the line is the \defemph{conclusion}. The first row has \defemph{destructor rules}, and the second row has \defemph{constructor rules}. If the conclusion in $\rnxor$ is standard, we will sometimes refer to it as $\rnxord$ and count it among the destructor rule, else we sometimes refer to it as $\rnxorc$ and count it among the constructor rules.}
    \label{tab:termalgtab}
\end{table}

\begin{definition}[Derivations]
    A \defemph{derivation} (or \defemph{proof\/}) is a tree $\pi$ with the following properties:
    \begin{itemize}
        \item Each node is labelled with a pair $(t, \rnrule)$ where $t$ is a normalized term and $\rnrule$ is one of the rules from Table~\ref{tab:termalgtab}. 
        \item Each leaf is labelled with a pair $(u, \rnax)$.
        \item If a node is labelled $(t, \rnrule)$ and its children are labelled $(t_{1}, \rnrule_{1}), \ldots, (t_{n}, \rnrule_{n})$ then there is an instance of $\rnrule$ with premises $t_{1}, \ldots, t_{n}$ and conclusion $t$. 
    \end{itemize} 
    If $(t, \rnrule)$ labels the root of a derivation $\pi$, then the \emph{last rule} of $\pi$ (denoted $\lrof(\pi)$) is $\rnrule$, and the \defemph{conclusion} of $\pi$ (denoted $\concof(\pi)$) is $t$. The \defemph{axioms} of $\pi$, denoted $\axiomsof(\pi)$ is the set $\embrace{u \mid (u,\rnax) \text{ labels some leaf of }\pi}$. An \defemph{occurrence of a rule} $\rnrule$ in $\pi$ is just a node that is labelled $(t, \rnrule)$ for some $t$. A \defemph{subproof} of $\pi$ is just a subtree.
    
    If $\axiomsof(\pi) \subseteq X$ and $\concof(\pi) = t$, then we say that $\pi$ proves $X \vdash t$. If there is a proof of $X \vdash t$, we also write $X \DYderives t$. We write $X \DYnderives t$ to mean that it is not the case that $X \DYderives t$. We always consider normalized $X \cup \embrace{t}$ when we talk about $X \vdash t$.
\end{definition}

\begin{definition}[Normal proofs]
    A proof $\pi$ is said to be \defemph{normal} if the following conditions hold: 
    \begin{enumerate}
        \item $\pi$ has no occurrence of the $\rnsplit, \rnsdec, \rnadec$ rules whose leftmost premise is the conclusion of a constructor.
        \item In all occurrences of the $\rnxor$ rule in $\pi$, 
        \begin{enumerate}
            \item no two premises are the same, and
            \item no premise is the same as the conclusion, and  
            \item no premise is the conclusion of $\rnxor$.
        \end{enumerate}
    \end{enumerate}
\end{definition}

\begin{theorem}[Normalization for $\DYderives$] 
    If $X \DYderives t$, there is a normal proof of $X \vdash t$. 
\end{theorem}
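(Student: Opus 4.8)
The plan is to prove the theorem by \emph{proof normalization}: starting from an arbitrary proof of $X \vdash t$, I would repeatedly apply local rewriting steps that eliminate the configurations forbidden by the definition of a normal proof, until none remain. Call a node a \emph{redex} if it witnesses a violation of one of the normality conditions. There are two families. A \emph{condition-1 redex} is an occurrence of $\rnsplit$, $\rnsdec$, or $\rnadec$ whose leftmost (major) premise is the conclusion of a constructor; a \emph{condition-2 redex} is an occurrence of $\rnxor$ that either repeats a premise, or has a premise equal to its conclusion, or has a premise that is itself the conclusion of an $\rnxor$. By definition a proof is normal exactly when it contains no redex, so it suffices to exhibit, for each kind of redex, a rewriting step that removes it, and then to argue that the rewriting process terminates.

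For the condition-1 redexes I would use the standard Dolev--Yao cut eliminations. If a $\rnpair$ concluding $\pairfn{u}{v}$ feeds a $\rnsplit$ yielding $u$ (resp.\ $v$), replace the whole subproof by the premise subproof of $u$ (resp.\ $v$); similarly, a $\rnsenc$ (resp.\ $\rnaenc$) feeding the matching $\rnsdec$ (resp.\ $\rnadec$) is replaced by the subproof deriving the plaintext. Each such step discards the constructor node, the destructor node, and at least one sibling subproof, so the conclusion is preserved and the set of axioms can only shrink. For the condition-2 redexes I would exploit the fact that $\nf{\cdot}$ respects the $\scxor$ identities (Table~\ref{tab:nfst}). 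In case (2a), if two premises of an $\rnxor$ coincide, delete both premises together with their subproofs; since $s \xor s = 0$, the normal form of the remaining multiset is unchanged and the conclusion is preserved. In case (2b), if a premise equals the conclusion $c$, replace the entire $\rnxor$ subproof by the subproof deriving that premise, which already concludes $c$. In case (2c), if a premise of an $\rnxor$ node $N$ is the conclusion of another $\rnxor$ node $N'$ with premises $s_{1}, \ldots, s_{m}$, merge the two applications into a single $\rnxor$ whose premises are $s_{1}, \ldots, s_{m}$ together with the remaining premises of $N$; associativity of $\xor$ guarantees that the recomputed normal form is again $c$. In every case the result is a valid proof with the same conclusion and with $\axiomsof \subseteq X$ preserved, so it still proves $X \vdash t$, possibly with fresh redexes introduced.

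To see that the rewriting terminates, I would attach to each proof the number of nodes of its tree and check that every step strictly decreases it: the condition-1 steps delete at least two nodes, (2a) deletes the subproofs of two premises, (2b) keeps a single subproof and discards the rest, and the flattening step (2c) collapses two $\rnxor$ nodes into one. Since the node count is a natural number, no infinite reduction sequence exists, and any proof to which no step applies is by construction normal. As reductions preserve $t$ and the inclusion $\axiomsof \subseteq X$, the terminal proof is a normal proof of $X \vdash t$, as required.

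The main obstacle is the $\scxor$ block: unlike the standard cuts, its reductions genuinely rely on the normal-form table to argue that cancellation, elimination, and flattening each preserve the conclusion, and they can cascade (removing a duplicate from premises $a, a, b$ leaves a single-premise $\rnxor$ concluding $\nf{b} = b$, which is then itself a (2b)-redex, and flattening may create fresh duplicates). What makes this manageable is precisely the choice of the node-count measure, which strictly decreases under \emph{every} step regardless of how the redexes interact; termination is therefore immediate, and I need not establish any confluence or carefully schedule the reductions. A secondary point to check is the degenerate cases where an $\rnxor$ collapses to a single factor or to the unit, which are absorbed by the conventions $\Xor{\{t\}} = t$ and $\Xor{\emptyset} = 0$ and by letting the cascading reductions run to completion. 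Note finally that $\rnxor$ with a standard conclusion is classified as the destructor $\rnxord$ rather than a constructor, so it never triggers a condition-1 step, keeping the two families of reductions cleanly separated.
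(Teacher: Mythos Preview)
Your proposal is correct and follows essentially the same approach as the paper's proof: both eliminate the forbidden patterns by local rewrites (cut elimination for constructor-into-destructor, deletion of duplicate $\rnxor$ premises, replacement by a premise equal to the conclusion, and flattening of nested $\rnxor$), and both argue termination via the strictly decreasing node count. Your write-up is in fact more explicit than the paper's about cascading reductions and degenerate $\rnxor$ cases, but the underlying argument is the same.
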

\begin{proof}
    The offending patterns can be rewritten, each time decreasing the size of the proof (number of nodes in the proof tree). As an example of ensuring the first condition, suppose there is a subproof that looks as follows: 
    \[
        \begin{prooftree}
            \hypo{}
            \ellipsis{$\pi_{1}$}{\ t\ }
            \hypo{}
            \ellipsis{$\pi_{2}$}{\ \pkfn{k}\ }
            \infer2[$\rnaenc$]{\aencfn{t}{\pkfn{k}}}
            \hypo{}
            \ellipsis{$\pi_{3}$}{\ k\ }
            \infer2[$\rnadec$]{t} 
        \end{prooftree}
    \]
    It can simply be replaced by $\pi_{1}$, yielding a smaller proof. 
    
    As to the second condition, suppose $\pi$ is a subproof with last rule $\rnxor$. If two of its immediate subproofs have the same conclusion, then they can both be removed, without affecting the conclusion, and we have a smaller proof.

    If one of the immediate subproofs $\delta$ has the same conclusion as $\pi$, then $\pi$ can be replaced with $\delta$, a smaller proof. 
    
    Suppose that $\pi$ has the following structure (with $t_{i} = \nf{u_{1} \xor \cdots \xor u_{k}}$ and $t = \nf{t_{1} \xor \cdots \xor t_{n}}$): 
    \[
        \begin{prooftree}[separation=2em]
            \hypo{}\ellipsis{$\pi_{1}$}{t_{1}}
            \hypo{\cdots} 
            \hypo{}\ellipsis{$\pi_{i-1}$}{t_{i-1}}
            \hypo{}\ellipsis{$\delta_{1}$}{u_{1}}
            \hypo{\cdots} 
            \hypo{}\ellipsis{$\delta_{k}$}{u_{k}}
            \infer3[\rnxor]{t_{i}}
            \hypo{}\ellipsis{$\pi_{i+1}$}{t_{i+1}}
            \hypo{\cdots} 
            \hypo{}\ellipsis{$\pi_{n}$}{t_{n}}
            \infer7[\rnxor]{t}
        \end{prooftree}
    \]
    It is easy to see that $t = \nf{t_{1} \xor \cdots t_{i-1} \xor u_{1} \xor \cdots \xor u_{k} \xor t_{i+1} \xor \cdots \xor t_{n}}$, and hence $\pi$ can be replaced by the following smaller proof. 
    \[
        \begin{prooftree}[separation=2em]
            \hypo{}\ellipsis{$\pi_{1}$}{t_{1}}
            \hypo{\cdots} 
            \hypo{}\ellipsis{$\pi_{i-1}$}{t_{i-1}}
            \hypo{}\ellipsis{$\delta_{1}$}{u_{1}}
            \hypo{\cdots} 
            \hypo{}\ellipsis{$\delta_{k}$}{u_{k}}
            \hypo{}\ellipsis{$\pi_{i+1}$}{t_{i+1}}
            \hypo{\cdots} 
            \hypo{}\ellipsis{$\pi_{n}$}{t_{n}}
            \infer9[\rnxor]{t}
        \end{prooftree}
    \]
    Thus any proof can be converted to a normal proof by applying these proof transformations repeatedly till they can no longer be applied.
\end{proof}

\begin{obs}\label{obs:lrxor}
    Suppose $\pi$ is a normal proof with last rule $\rnxor$ and conclusion $t$. Let the immediate subproofs be $\pi_{1}, \ldots, \pi_{n}$, with conclusions $t_{1}, \ldots, t_{n}$ respectively. Then the following hold: 
    \begin{enumerate}
        \item If $t_{i}$ is non-standard, then $\lrof(\pi_{i})$ is a destructor rule other than $\rnxord$. 
        \item If $t_{i}$ is standard, then 
        \begin{enumerate}
            \item If $\lrof(\pi)$ is $\rnxord$, then $t_{i}$ is a factor of some non-standard $t_{j}$. Furthermore, $t$ is a factor of some non-standard $t_{j}$.  
            \item If $\lrof(\pi)$ is $\rnxorc$, then $t_{i}$ is either a factor of $t$ or a factor of some non-standard $t_{j}$.  
        \end{enumerate}
    \end{enumerate}
\end{obs}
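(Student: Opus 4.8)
The plan is to isolate a single combinatorial fact about the conclusion and then read off every clause from it together with the normality conditions. Because each node of a proof carries a \emph{normalized} term, every premise $t_j$ is normalized; hence $\factors(t_j) = \embrace{t_j}$ when $t_j$ is standard, and $\factors(t_j)$ is a set of pairwise-distinct normalized standard terms when $t_j$ is non-standard. Put $F := \Cup_j \factors(t_j)$, the multiset union of the factors of all premises. The factor multiset of $t_1 \xor \cdots \xor t_n$ is exactly $F$, and since every element of $F$ is already normalized, the inner sum in the last row of Table~\ref{tab:nfst} collapses to $\occ{v}{F}$. Thus the factors of the conclusion $t = \nf{t_1 \xor \cdots \xor t_n}$ are precisely the terms of \emph{odd} multiplicity in $F$. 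This parity characterization is the engine of the whole argument.

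For Part 1 I classify rules by the shape of their conclusion. Each constructor rule other than $\rnxorc$ concludes a standard term ($\pkfn{k}$, a pair, or an encryption), and $\rnxord$ also concludes a standard term; none of these can conclude the non-standard $t_i$. The only remaining possibilities are $\rnax, \rnsplit, \rnsdec, \rnadec$, and $\rnxorc$. But $\rnxorc$ is an occurrence of $\rnxor$, so normality condition 2(c), which forbids a premise of an $\rnxor$ from being the conclusion of an $\rnxor$, excludes it. Hence $\lrof(\pi_i) \in \embrace{\rnax, \rnsplit, \rnsdec, \rnadec}$, which are exactly the destructor rules other than $\rnxord$ (recall that $\rnax$ is grouped with the destructors in Table~\ref{tab:termalgtab}).

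For Part 2 I fix a standard premise $t_i$ and count its occurrences in $F$. Its own factor set $\embrace{t_i}$ contributes $1$; a distinct standard premise contributes nothing, since normality 2(a) forbids two equal premises; and each non-standard premise $t_j$ contributes $1$ exactly when $t_i$ is a factor of $t_j$. Writing $N_i$ for the number of non-standard premises having $t_i$ as a factor, we get $\occ{t_i}{F} = 1 + N_i$. If $\lrof(\pi) = \rnxorc$ then $t$ is non-standard, and $t_i$ is a factor of $t$ iff $\occ{t_i}{F}$ is odd; when it is even, $N_i$ is odd and hence positive, so $t_i$ is a factor of some non-standard $t_j$, which is clause 2(b). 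If $\lrof(\pi) = \rnxord$ then $t$ is standard, so normality 2(b) gives $t_i \neq t$ and hence $t_i$ is not a factor of $t$; the parity characterization then forces $\occ{t_i}{F}$ to be even, so $N_i$ is odd and $t_i$ is again a factor of some non-standard $t_j$. For the ``furthermore'' I count occurrences of $t$ itself: $t$ is standard, so $\occ{t}{F}$ is odd; standard premises contribute nothing (normality 2(b) gives $t_j \neq t$), so this odd count equals the number of non-standard premises having $t$ as a factor, which is therefore positive.

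The rule-by-rule check of Part 1 and the parity bookkeeping of Part 2 are mechanical. The delicate point, and the only real obstacle, is the boundary behaviour of the parity characterization when the odd-multiplicity set is empty, i.e.\ when $t$ collapses to $0$: there $\factors(0) = \embrace{0}$ does not match the empty odd-multiplicity set. One checks that the first assertions still go through (every term then has even multiplicity in $F$, so $N_i$ is odd for each standard $t_i$), but the ``furthermore'' genuinely needs $t \neq 0$, since $0$ is never a factor of a normalized non-standard term. I would therefore either invoke a standing convention that normal proofs do not derive $0$ by $\rnxor$, or state the ``furthermore'' under the (harmless in context) hypothesis $t \neq 0$.
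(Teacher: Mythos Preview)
Your proof is correct and mirrors the paper's: rule elimination for Part~1 and a cancellation argument for Part~2, with your explicit parity bookkeeping in the multiset $F$ making precise what the paper phrases informally (``$t_i$ disappears'', ``must have been cancelled by a factor of a non-standard $t_j$''). Your caveat about the boundary case $t = 0$ is well-taken and in fact applies equally to the paper's own proof of the ``furthermore'' clause, which tacitly assumes $t \neq 0$.
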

\begin{proof}
    \phantom{a}
    \begin{enumerate}
        \item Suppose $t_{i}$ is non-standard. By normality of $\pi$, $\lrof(\pi_{i}) \neq \rnxor$. Since $t_{i}$ is non-standard, $\lrof(\pi)$ is not in $\{\rnpair, \rnsenc, \rnaenc\}$. Thus $\lrof(\pi_{i})$ has to be a destructor other than $\rnxord$. 
        \item Suppose $t_{i}$ is standard. 
        \begin{enumerate}
            \item By normality, $t_{i} \neq t$ and $t_{i} \neq t_{j}$ for any $j \neq i$. Since $t_{i}$ disappears as a result of the $\xor$ operator, this means that $t_{i} \in \factors(t_{j})$ for some non-standard $t_{j}$. 
            
            Since $\lrof(\pi)$ is $\rnxord$, $t$ is a standard term equal to $\nf{t_{1} \xor \cdots \xor t_{n}}$, and this can happen only if $t \in \factors(t_{j})$ for some non-standard $t_{j}$ ($t_{j}$ cannot be standard, for then it would be the same as $t$, violating the normality of $\pi$).

            \item A standard $t_{i}$, if it is not a factor of $t$, must have been cancelled by a factor of a non-standard $t_{j}$, and thus we have the desired conclusion. 
            \qedhere
        \end{enumerate}
    \end{enumerate}
\end{proof}

\begin{theorem}[Subterm property]
    Let $\pi$ be a normal proof with $\axiomsof(\pi) \subseteq X$ and $\concof(\pi) = t$. Then $\termsof(\pi) \subseteq \subterms{X \cup \embrace{t}}$. Further, if $\lrof(\pi)$ is a destructor, then $\termsof(\pi) \subseteq \subterms{X}$. 
\end{theorem}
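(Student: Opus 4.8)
The plan is to prove the statement by structural induction on the normal proof $\pi$, strengthening it slightly so that the induction closes. Concretely, I would prove simultaneously that (i) if $\lrof(\pi)$ is a destructor or $\rnax$, then $\termsof(\pi) \subseteq \subterms{X}$, and (ii) if $\lrof(\pi)$ is a constructor, then $\termsof(\pi) \subseteq \subterms{X \cup \{t\}}$. Both parts of the theorem then follow immediately: its second sentence is exactly (i), while for the first sentence the destructor/axiom case gives $\subterms{X} \subseteq \subterms{X \cup \{t\}}$ and the constructor case is (ii) verbatim. Throughout I would rely on two elementary facts read off from Table~\ref{tab:nfst}: every factor of a non-standard term is a subterm of it, and $\subterms{}$ is transitive, so that $\subterms{Y \cup \{s\}} = \subterms{Y}$ whenever $s \in \subterms{Y}$.

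The base case $\lrof(\pi) = \rnax$ is immediate, since then $t \in X$ and $\termsof(\pi) = \{t\} \subseteq \subterms{X}$. For the constructor cases I would argue uniformly: each immediate subproof has a conclusion that is a subterm of $t$, so by the induction hypothesis its terms lie in $\subterms{X \cup \{t\}}$. For $\rnpk, \rnpair, \rnsenc, \rnaenc$ this is direct (e.g.\ $k \in \subterms{\pkfn{k}}$, and the components of a pair or encryption are subterms). The one case needing care is $\rnxorc$, where a premise $t_i$ need not be a subterm of $t$. Here I would invoke Observation~\ref{obs:lrxor}: a non-standard $t_i$ is derived by a destructor, so the induction hypothesis places $\termsof(\pi_i) \subseteq \subterms{X}$; a standard $t_i$ is either a factor of $t$ (hence in $\subterms{t}$) or a factor of some non-standard $t_j$, and in the latter case $t_j$ is itself derived by a destructor, so $t_j \in \subterms{X}$ and therefore $t_i \in \subterms{t_j} \subseteq \subterms{X}$. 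Either way the subproof's terms collapse into $\subterms{X \cup \{t\}}$.

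For the destructor cases the engine is normality condition~1. For $\rnsplit, \rnsdec, \rnadec$, the leftmost premise (a pair or an encryption) is not the conclusion of a constructor, so its subproof ends in a destructor or $\rnax$; by the induction hypothesis that premise lies in $\subterms{X}$, and hence so do its immediate subterms --- in particular the conclusion $t$, and (for $\rnsdec, \rnadec$) the decryption key $v$ (resp.\ $k$). Since the key is then already in $\subterms{X}$, even if its subproof ends with a constructor the induction hypothesis gives terms in $\subterms{X \cup \{v\}} = \subterms{X}$. For $\rnxord$ I would again use Observation~\ref{obs:lrxor}(2a): the standard conclusion $t$ is a factor of some non-standard premise $t_j$, which is derived by a destructor and hence lies in $\subterms{X}$, so $t \in \subterms{X}$; the premises themselves are then handled exactly as in the $\rnxorc$ analysis above.

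The main obstacle is the interaction between $\scxor$ and the subterm bound: unlike the standard operators, an $\scxor$ premise can be a standard term that is not visibly a subterm of the conclusion, having arisen by cancellation against a factor of another premise. Observation~\ref{obs:lrxor} is precisely the tool that tames this, and the delicate point is to combine it with the \emph{destructor-strong} induction hypothesis, so that such hidden standard premises are traced all the way back into $\subterms{X}$ rather than merely $\subterms{X \cup \{t\}}$. A secondary subtlety, flagged above, is that a decryption key may be produced by a constructor; the argument closes only because normality forces the key to be a subterm of the already-bounded ciphertext, which lets the simplification $\subterms{X \cup \{v\}} = \subterms{X}$ take effect.
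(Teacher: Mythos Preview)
Your proposal is correct and follows essentially the same approach as the paper: a structural induction on $\pi$ that proves the stronger destructor-case bound $\termsof(\pi)\subseteq\subterms{X}$ simultaneously, uses normality condition~1 to push the ciphertext premise of $\rnsdec/\rnadec/\rnsplit$ into $\subterms{X}$ (whence the key is absorbed via $\subterms{X\cup\{v\}}=\subterms{X}$), and handles both $\rnxor$ cases by invoking Observation~\ref{obs:lrxor} exactly as the paper does. The only cosmetic difference is that you explicitly fold $\rnax$ into the strong case (i), whereas the paper leaves this implicit.
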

\begin{proof}
    Assume that the statement of the theorem holds for all subproofs of $\pi$.\footnote{Note that $\axiomsof(\delta) \subseteq X$ for all subproofs $\delta$ of $\pi$.} Letting $\lrof(\pi) = \rnrule$, the following cases arise:
    \begin{description}
        \item[$\rnrule = \rnax$:] Then $\termsof(\pi) \subseteq \axiomsof(\pi) \subseteq X \subseteq \subterms{X}$.
        \item[$\rnrule = \rnsplit$ or $\rnadec$:] Similar to the $\rnsdec$ case dealt with later.  
        \item[$\rnrule = \rnpk, \rnpair$ or $\rnaenc$:] Similar to the $\rnsenc$ case dealt with later. 
        
        \item[$\rnrule = \rnsdec$:] Let $\delta$ and $\delta'$ be the immediate subproofs, with conclusions $\sencfn{t}{u}$ and $u$, respectively. Since $\pi$ is normal, $\delta$ does not end in a constructor rule, so the stronger induction statement holds -- $\termsof(\delta) \subseteq \subterms{X}$. In particular, $\sencfn{t}{u} \in \subterms{X}$, and hence $t, u \in \subterms{X}$. By IH for $\delta'$, $\termsof(\delta') \subseteq \subterms{X \cup \embrace{u}} \subseteq \subterms{X}$. It follows that $\termsof(\pi) = \termsof(\delta) \cup \termsof(\delta') \cup \embrace{t} \subseteq \subterms{X}$. 
        
        \item[$\rnrule = \rnsenc$:] Let $t = \sencfn{u}{v}$, and let $\delta$ and $\delta'$ be the immediate subproofs, with conclusions $u$ and $v$. By IH, $\termsof(\delta) \subseteq \subterms{X \cup \embrace{u}} \subseteq \subterms{X \cup \embrace{t}}$ and $\termsof(\delta') \subseteq \subterms{X \cup \embrace{v}} \subseteq \subterms{X \cup \embrace{t}}$, and it follows that $\termsof(\pi) = \termsof(\delta) \cup \termsof(\delta') \cup \embrace{t} \subseteq \subterms{X \cup \embrace{t}}$. 
        
        \item[$\rnrule = \rnxord$:] Let $\delta$ be any immediate subproof of $\pi$. If its conclusion is non-standard, then it ends in a destructor rule (by Observation~\ref{obs:lrxor}). By IH, $\termsof(\delta) \subseteq \subterms{X}$ for such a $\delta$. In particular, all non-standard premises of $\rnrule$ are in $\subterms{X}$. 
        
        If the conclusion $u$ of $\delta$ is standard, then by Observation~\ref{obs:lrxor}, $u$ is a subterm of some non-standard premise of $\rnrule$, which itself is a subterm of $X$, and by IH, $\termsof(\delta) \subseteq \subterms{X \cup \embrace{u}} \subseteq \subterms{X}$.

        Finally, $t$ itself is a factor of some non-standard premise (again by Observation~\ref{obs:lrxor}), and hence $t \in \subterms{X}$. Thus we see that $\termsof(\pi) \subseteq \subterms{X}$. 
        
        \item[$\rnrule = \rnxorc$:] By arguing as in the above case, we see that for all subproofs $\delta$ with a non-standard term as conclusion, $\termsof(\delta) \subseteq \subterms{X}$. For subproofs $\delta$ with a standard term $u$ as conclusion, $u$ is either a subterm of one of the non-standard premises, or a subterm of $t$ (by Observation~\ref{obs:lrxor}). Thus we see that $\termsof(\pi) \subseteq \subterms{X \cup \embrace{t}}$. 
        \qedhere 
    \end{description}
\end{proof}

\begin{theorem} 
    There is a \textsc{ptime} algorithm to check whether $X \vdash t$, for any given normalized set $X \cup \embrace{t}$.
\end{theorem}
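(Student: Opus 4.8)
The plan is to combine the two structural results just proved: the Normalization theorem lets us restrict attention to normal proofs, and the Subterm property then confines every term of such a proof to the polynomial-sized set $S := \subterms{X \cup \embrace{t}}$. Since $|S| = \size{X \cup \embrace{t}}$ and $S$ is closed under factors (for non-standard $r \in S$ each $u \in \factors(r)$ lies in $\subterms{r} \subseteq S$), it suffices to compute the set $D \subseteq S$ of all elements of $S$ derivable from $X$, and then test whether $t \in D$. Everything reduces to computing this bounded closure efficiently.

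I would compute $D$ by a monotone fixed-point iteration starting from $D := X$. Each round has two phases. In the local phase, for every $s \in S \setminus D$ I add $s$ whenever it is the conclusion of an instance of one of the constructor or destructor rules $\rnpair, \rnsenc, \rnaenc, \rnpk, \rnsplit, \rnsdec, \rnadec$ whose premises already lie in $D$; this is a check over polynomially many candidate conclusions (for destructors the conclusion is automatically a subterm of a premise, hence in $S$, and for constructors we simply discard any conclusion outside $S$). The second phase handles $\rnxor$, whose premises are unbounded in number. Here the key observation is that a normalized term is determined by its set of standard factors: letting $F \subseteq S$ be the (polynomially many) standard factors of elements of $S$ and encoding each $s \in S$ as the vector $v_{s} \in \mathbb{F}_{2}^{F}$ of factor parities, we have $v_{\nf{d_{1} \xor \cdots \xor d_{k}}} = v_{d_{1}} + \cdots + v_{d_{k}}$. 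Hence the set of terms derivable by a single $\rnxor$ step from premises in $D$ is exactly $\embrace{s \in S \mid v_{s} \in \mathrm{span}_{\mathbb{F}_{2}}\embrace{v_{d} \mid d \in D}}$, which I compute by Gaussian elimination over $\mathbb{F}_{2}$ and add to $D$. I repeat rounds until $D$ stabilizes.

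For correctness, soundness is immediate: each element added to $D$ is the conclusion of a genuine rule instance with premises in $D$, and in the XOR phase every vector in the span is a subset-sum of the $v_{d}$ (over $\mathbb{F}_{2}$), hence realized as the normal form of an $\rnxor$ of elements of $D$. Completeness follows by induction on a normal proof $\pi$ of $X \vdash t$: by the Subterm property $\termsof(\pi) \subseteq S$, and for every subproof with conclusion $c$ the term $c$ is eventually added to $D$ --- axioms are in $X \subseteq D$; a constructor/destructor conclusion is added in the local phase once its smaller premises are in $D$; and an $\rnxor$ conclusion $c$ with premises $c_{1}, \ldots, c_{n} \in D$ satisfies $v_{c} = v_{c_{1}} + \cdots + v_{c_{n}}$, so it is added in the XOR phase. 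Since the final $D$ is closed under both phases, $\concof(\pi) = t \in D$. For the running time, $D$ grows monotonically inside $S$, so at most $|S|$ rounds occur; each round inspects polynomially many local rule instances and runs one Gaussian elimination on an $|F| \times |D|$ matrix, all polynomial in $\size{X \cup \embrace{t}}$.

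The one genuinely non-syntactic ingredient --- and the step I expect to be the main obstacle --- is the treatment of $\rnxor$. Unlike every other rule it takes unboundedly many premises, so its one-step closure cannot be found by checking a single tuple of terms; recognizing that XOR-derivability inside $S$ is precisely an $\mathbb{F}_{2}$-span membership question is the crux. The normality conditions 2(a)--2(c) are exactly what make this identification legitimate, since they forbid repeating a premise, reusing the conclusion, and stacking $\rnxor$ upon $\rnxor$, so that a single span computation per round suffices and no XOR-of-XOR has to be tracked separately. The rest is the standard saturation argument for Dolev--Yao deduction.
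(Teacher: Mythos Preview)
Your proposal is correct and follows essentially the same approach as the paper: both restrict attention to $S=\subterms{X\cup\{t\}}$ via the subterm property, iterate a one-step closure from $X$ until a fixed point, handle the bounded-arity rules by direct enumeration, and handle $\rnxor$ by encoding normalized terms as $\mathbb{F}_{2}$-vectors over the standard elements of $S$ and testing span membership by Gaussian elimination. Your write-up is somewhat more explicit about soundness and completeness (the induction on a normal proof) and about why the normality clauses 2(a)--2(c) justify a single span computation per round, but the algorithm and the key linear-algebra idea are the same as the paper's.
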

\begin{proof}
    Let $\stname$ be shorthand for $\subterms{X \cup \embrace{t}}$, with $|\stname| = N$. It can be easily seen that $N \leq \size{X \cup \embrace{t}}$. For any $Y \subseteq \stname$, we define 
    \[
        \onestep(Y) \coloneq Y \cup \{u \in \stname \mid u \text{ is the conclusion of a rule all of whose premises are in } Y\}.
    \] 
    Note that $Y \subseteq \onestep(Y)$. We define an infinite sequence of sets $Y_{0} \subseteq Y_{1} \subseteq \ldots$ with $Y_{0} \coloneq X$ and $Y_{i+1} \coloneq \onestep(Y_{i})$ for each $i \geq 0$. Since there are at most $N$ elements in $\stname$, we have that $Y_{i} = Y_{N}$ for all $i \geq N$. It is clear that $X \vdash t$ iff $t \in Y_{N}$. 

    To compute $\onestep(Y)$, start with $Z \coloneq Y$ and add elements to $Z$ in two stages. In the first stage, for every $M \subseteq Y$ with $|M| \leq 2$, and every $\rnrule \neq \rnxor$, if we can apply $\rnrule$ to the premises $M$ and get some $u \in \stname$ as conclusion, we add $u$ to $Z$. This stage runs in $O(N^{2})$ time. In the next stage, for each $u \in \stname$, we add $u$ to $Z$ if there is $M \subseteq Y$ such that $u = \nf{\Xor{M}}$. This check can be done as follows: we treat each term in $\stname$ as a vector over the standard terms in $\stname$ with coefficients from $\embrace{0,1}$, and use Gaussian elimination or a similar procedure to check if the vector representing $u$ is obtained as a linear combination (over $\mathbb{Z}/2$) of the vectors representing each element in $Y$. This test can be performed in time $O(N^{3})$, at worst. The $Z$ we get after the two stages -- running in time $O(N^{4})$ -- is $\onestep(Y)$. Since we compute $\onestep$ $N$ times, the overall running time is $O(N^{5})$.
\end{proof}

\section{Protocols and runs}\label{sec:prot}
\begin{definition}[Protocols]\label{def:prot}
    A \defemph{role} $\role$ is a pair $(X, \rho)$ where $X$ is a set of \emph{standard normalized terms}, and $\rho$ is a finite sequence $(r_{1}, s_{1})\ldots(r_{n}, s_{n})$ with each $r_{i}$ and $s_{i}$ a normalized term (denoting a \emph{received term} and the term \emph{sent} in response). For any role $\role = (X,\rho)$ as above, $\agvarsof(\rho) \coloneq \bigcup_{i \leq n} \left(\varsof(s_{i})\setminus\varsof(\{r_{1}, \ldots, r_{i}\})\right)$ is the set of \defemph{agent variables} of $\role$ and $\intvarsof(\rho) \coloneq \varsof(\rho) \setminus \agvarsof(\rho)$ is the set of \defemph{intruder variables} of $\role$.\footnote{Agent variables are those that first occur in a send, and are to be instantiated by an agent playing the role as part of a protocol execution. The intruder variables are the ones that first occur in a receive, and whose meaning in an execution will be fixed by the intruder.} A role $(X, \rho)$ as above is said to be \defemph{well-formed} if $\varsof(X) \subseteq \agvarsof(\rho)$ and for all $i \leq n$, $X \cup \{r_{1}, \ldots, r_{i}\} \DYderives s_{i}$.\footnote{Protocol models usually do not make explicit these ``honest agent derivability checks'', since it is assumed that all realistic protocols satisfy this requirement -- every send in a session is derivable from the previous receives in that session. The proof of Lemma~\ref{lem:sigmax-in-sigmaTp} benefits from this assumption having been made explicit.}

    A \defemph{protocol} $\prot$ is a pair $(X_{0}, \{\role_{1}, \ldots, \role_{k}\})$, where $X_{0}$ is a finite set of \emph{normalized ground terms} (representing the \defemph{intruder's initial knowledge}) and each $\role_{i}$ is a well-formed role. 
\end{definition}

\begin{definition}[Sessions of a protocol]\label{def:sessions}
    A \defemph{session} of a protocol $\prot$ is a pair $\ses = (\role, \honsub)$ where $\role = (X, \rho)$ is a role of $\prot$; $\honsub$ is a substitution with $\honsub(x) \in \names$ for $x \in \agvarsof(\role)$ and $\honsub(x) \in \vars$ for $x \in \intvarsof(\role)$; and $\rho\honsub$ consists of normalized terms only.

    Two sessions $((X_{1}, \rho_{1}), \honsub_{1})$ and $((X_{2}, \rho_{2}), \honsub_{2})$ are said to be \defemph{coherent} if $\varsof(\rho_{1}\honsub_{1}) \cap \varsof(\rho_{2}\honsub_{2}) = \emptyset$.
\end{definition}

\begin{definition}[Runs of a protocol]\label{def:protruns}
    Suppose $\prot = (X_{0}, \{\role_{1}, \ldots, \role_{k}\})$ is a protocol, and suppose that 
    \[
        \sessions = \{((X_{1}, \rho_{1}), \honsub_{1}), \ldots, ((X_{\ell}, \rho_{\ell}), \honsub_{\ell})\}
    \] 
    is a set of pairwise coherent sessions of $\prot$.
    
    The set of \defemph{runs generated by $\sessions$}, denoted $\runs(\sessions)$, consists of all pairs $(\xi, \intsub)$ s.t.:
    \begin{itemize}
		\item $\xi = (r_{1},s_{1}) \cdots (r_{n},s_{n})$ is a prefix of an interleaving of $\{\rho_{1}\honsub_{1}, \ldots, \rho_{\ell}\honsub_{\ell}\}$, and written 
        \[\recsend{r_{1}}{s_{1}}\ \cdots\ \recsend{r_{n}}{s_{n}}.\]
        \item $\intsub$ is a normalized ground substitution with $\dom(\intsub) = \varsof(\xi)$.
        \item For $0 \leq i \leq n$, letting $X_{i} = X_{0} \cup \{s_{1}, \ldots,s_{i}\}$, we have:
        \[
            \nf{X_{i-1}\intsub} \DYderives \nf{r_{i}\intsub}. 
        \]
    \end{itemize}
    We denote by  $\runs(\prot)$ the set of all $(\xi, \intsub) \in \runs(\sessions)$ for all finite sets $\sessions$ of pairwise coherent sessions of $\prot$. If $|\sessions| = K$ and $(\xi, \intsub) \in \runs(\sessions)$, we say that it is a \defemph{$K$-bounded run} of $\prot$.  
    
    If $\xi$ is a run ($K$-bounded run) as above, and $\nf{X_{n}\intsub} \DYderives \secret$, then we call $(\xi, \intsub)$ an \defemph{attack} (a $K$-bounded attack). 
\end{definition}

\begin{definition}[$K$-bounded insecurity problem (for any fixed $K \geq 0$)]\label{def:insecurity}
    Given a protocol $\prot$, check whether there exists a $K$-bounded attack $(\xi, \intsub)$ of $\prot$. 
\end{definition}
We will use ``insecurity problem'' to mean the $K$-bounded insecurity problem for some $K$.

\section{Insecurity problem for \textsc{xor}}\label{sec:insecurity}
Fix a bound $K$, a protocol $\prot$, a set of pairwise coherent sessions $\sessions = \{((X_{1},\rho_{1}),\honsub_{1}), \ldots, ((X_{K},\rho_{K}),\honsub_{K})\}$, and a run $(\xi, \intsub) \in \runs(\sessions)$ (with $\xi = \recsend{r_{1}}{s_{1}}\ \cdots\ \recsend{r_{n}}{s_{n}}$), and sets of terms $X_{0}, \ldots, X_{n}$ satisfying the conditions of Definition~\ref{def:protruns}, and s.t.\ $\nf{X_{n}\intsub} \DYderives \secret$. Note that $X_{i} \subseteq X_{i+1}$ for $i < n$. Define the following sets:
\[
    \constst \coloneq \subterms{X_{0} \cup \left(\textstyle{\bigcup_{i\leq{}K}X_{i}\honsub_{i}}\right) \cup \{r_{1}, s_{1}, \ldots, r_{n}, s_{n}, \secret\}} \qquad \stnonvars \coloneq \{t \in \constst \mid t\text{ is standard and }t \notin \vars\}
\]

\begin{obs}\label{obs:honestsends}
    For every $m \leq n$, there is a set of ground terms $Y \subseteq \stnonvars$ such that $Y \cup \{r_{1}, \ldots, r_{m}\} \DYderives s_{m}$.
\end{obs}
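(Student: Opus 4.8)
The plan is to trace the send $s_m$ back to the role it comes from and to invoke the well-formedness clause that was built into Definition~\ref{def:prot} for exactly this purpose. First I would locate, inside the interleaving $\xi$, which session produced the step sitting at position $m$: say it is the $p$-th step of the session $((X_j,\rho_j),\honsub_j)$, where $\rho_j = (r^{j}_{1},s^{j}_{1})\cdots(r^{j}_{q},s^{j}_{q})$. Since the terms of $\rho_j\honsub_j$ are normalized by the definition of a session, we have exactly $r_m = r^{j}_{p}\honsub_j$ and $s_m = s^{j}_{p}\honsub_j$ (no further normalization is needed on these). Well-formedness of the role $(X_j,\rho_j)$ then supplies the derivation $X_j \cup \embrace{r^{j}_{1}, \ldots, r^{j}_{p}} \DYderives s^{j}_{p}$.

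Next I would push this derivation through the substitution $\honsub_j$. Using the standard fact that $\DYderives$ is closed under instantiation followed by renormalization, the derivation above yields $\nf{X_j\honsub_j} \cup \embrace{r^{j}_{1}\honsub_j, \ldots, r^{j}_{p}\honsub_j} \DYderives s_m$, the right-hand side being $\nf{s^{j}_{p}\honsub_j} = s_m$ and the receive hypotheses being already normalized. Because an interleaving preserves the internal order of each session, steps $1,\ldots,p$ of session $j$ all occur in $\xi$ at positions $\leq m$, so $\embrace{r^{j}_{1}\honsub_j, \ldots, r^{j}_{p}\honsub_j} \subseteq \embrace{r_1, \ldots, r_m}$; monotonicity of $\DYderives$ then gives $\nf{X_j\honsub_j} \cup \embrace{r_1, \ldots, r_m} \DYderives s_m$. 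I would therefore set $Y \coloneq \nf{X_j\honsub_j}$ (or, if a minimal witness is wanted, the $\rnax$-leaves of a normal proof that lie in $\nf{X_j\honsub_j}$), which already satisfies $Y \cup \embrace{r_1, \ldots, r_m} \DYderives s_m$.

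It remains to verify $Y \subseteq \stnonvars$, i.e.\ that every element of $\nf{X_j\honsub_j}$ is a ground, standard, non-variable subterm of the set defining $\constst$. Groundness is immediate, since well-formedness gives $\varsof(X_j) \subseteq \agvarsof(\rho_j)$ while $\honsub_j$ maps every agent variable to a name, so $X_j\honsub_j$ has no variables; being ground it is a fortiori non-variable. Standardness follows because each $t \in X_j$ is standard, substituting atomic names leaves the top symbol unchanged, and normalizing a standard term keeps it standard (last two rows of Table~\ref{tab:nfst}). The step I expect to be the main obstacle is the final membership $Y \subseteq \constst$: since $\constst$ is $\subterms{\cdots X_i\honsub_i \cdots}$, reading off $\nf{X_j\honsub_j} \subseteq \constst$ from $X_j\honsub_j \subseteq \bigcup_{i\leq K} X_i\honsub_i$ requires that $X_j\honsub_j$ already be normalized, so that $\nf{X_j\honsub_j} = X_j\honsub_j$ and the inclusion $Z \subseteq \subterms{Z}$ applies. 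I would discharge this by checking that instantiating a standard normalized term of $X_j$ by names cannot trigger an $\xor$-cancellation (no two distinct factors collapse under $\honsub_j$), because if normalization genuinely changed a term then $\nf{X_j\honsub_j}$ would not in general be a subterm of $X_j\honsub_j$ and the clean inclusion into $\constst$ would fail. The two remaining ingredients — closure of $\DYderives$ under substitution and order-preservation of interleavings — are routine and I would only state them.
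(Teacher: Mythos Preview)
Your approach is essentially the paper's: locate the session producing $s_m$, invoke well-formedness of the role to get $X_j \cup \{r^j_1,\ldots,r^j_p\} \DYderives s^j_p$, push the derivation through $\honsub_j$, and use order-preservation of interleavings to absorb the session-local receives into $\{r_1,\ldots,r_m\}$. The paper takes $Y = X_j\honsub_j$ directly (no outer $\nf{\cdot}$), asserts that this is already a set of standard normalized ground terms, and observes that $X_j\honsub_j \subseteq \constst$ is immediate from the very definition of $\constst$ (which explicitly contains $\subterms{\bigcup_{i\leq K} X_i\honsub_i}$). So the inclusion into $\stnonvars$ is trivial once normalization is granted, and your worry about $\nf{X_j\honsub_j}$ versus $X_j\honsub_j$ dissolves.

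One caution about your proposed discharge of the obstacle: the claim that ``instantiating a standard normalized term by names cannot trigger an $\xor$-cancellation'' is not true in general. If $\honsub_j$ sends two distinct agent variables $x,y$ to the same name, or sends $x$ to a name $a$ that already sits as a sibling factor in some $\xor$-subterm, a cancellation does occur (e.g.\ $x \xor y \mapsto a \xor a$, or $a \xor x \mapsto a \xor a$). The paper does not argue this point at all; it simply states that $X_i\honsub_i$ is a set of standard normalized ground terms, in the same breath as it uses ``$\rho_i\honsub_i$ consists only of normalized terms'' to justify that the relabelled tree is a valid proof. So your plan matches the paper, but the specific justification you sketch for the normalization step does not go through as written; you would either appeal to the same assertion the paper makes, or add the parallel hypothesis (that $X_j\honsub_j$ is normalized) to the definition of a session alongside the existing one for $\rho_j\honsub_j$.
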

\begin{proof}
    Suppose $m \leq n$, and $s_{m}$ is the $m$-th send in the run $\xi$. There is an $i \leq K$ and a prefix $(r^{1}, s^{1})\ \cdots\ (r^{k}, s^{k})$ of $\rho_{i}$ such that $s_{m} = s^{k}\honsub_{i}$. Since $\rho_{i}$ is well-formed, $X_{i} \cup \{r^{1}, \ldots, r^{k}\} \vdash s^{k}$ via some proof (say $\pi$). Since $\rho_{i}\honsub_{i}$ consists only of normalized terms, $\pi$ can be transformed easily (by replacing each label $(u,\rnrule)$ with $(u\honsub_{i}, \rnrule)$) to a proof of $X_{i}\honsub_{i} \cup \{r^{1}\honsub_{i}, \ldots, r^{k}\honsub_{i}\} \vdash s^{k}\honsub_{i}$. But note that $X_{i}\honsub_{i} \subseteq \constst$ is a set of standard normalized ground terms, and $\{r^{1}\honsub_{i}, \ldots, r^{j}\honsub_{i}\} \subseteq \{r_{1}, \ldots, r_{m}\}$. Thus we see that for every sent message $s_{m}$ occurring in $\xi$, there is some set of ground terms $Y \subseteq \stnonvars$ such that $Y \cup \{r_{1}, \ldots, r_{m}\} \DYderives s_{m}$.
\end{proof}

\subsection{Types and zaps}\label{subsec:type-zap}
\begin{definition}[Typed and zappable terms]\label{def:zappable} 
    A standard normalized term $t$ is \defemph{typed} if $t \in \nfintnonvars$. A non-atomic standard term $t$ is \defemph{zappable} if $\nf{t} \notin \nfintnonvars$.  
\end{definition}

\begin{definition}[Zap]\label{def:zap}
    For a ground term $t$, we define the \defemph{zap} of $t$, denoted $\zap{t}$, as follows:
    \begin{align*}
        \zap{a} \quad &\coloneq \quad{}a &\text{if }a \in \names \\
        \zap{\func(t_{1}, \ldots, t_{n})} &\coloneq \quad 0 &\text{if }\func(t_{1}, \ldots, t_{n})\text{ is zappable} \\
        \zap{\func(t_{1}, \ldots, t_{n})} &\coloneq \quad \nf{\func(\zap{t_{1}}, \ldots, \zap{t_{n}})} & \text{otherwise}
    \end{align*}
    Note that $\zap{t}$ is normalized, by definition. For any multiset of terms $X$, let $\zap{X}$ be the \emph{multiset} $\{\,\zap{t} \mid t \in X\}$. 
\end{definition}

\begin{lemma}\label{lem:funcnormA}
    Suppose $f$ is a function on terms that satisfies the equation $f\left(\Xor{M}\right) = \nf{\Xor{f(M)}}$, for any multiset of standard normalized terms $M$. Then for any multiset $A$ of standard normalized terms, $f\left(\nf{\Xor{A}}\right) = \nf{\Xor{f(A)}}$.\footnote{Note that if $S$ is the multiset $\{s_{1}, \ldots, s_{n}\}$, then $\Xor{f(S)} = f(s_{1}) \xor \cdots \xor f(s_{n})$.} 
\end{lemma}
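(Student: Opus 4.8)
The plan is to reduce the claim to the hypothesis by splitting $A$ into its ``odd-occurrence'' part and an even remainder. Write $A = A' \Cup E$, where $A'$ is the set of distinct terms occurring an odd number of times in $A$ (each taken exactly once) and $E$ collects everything else, so that every element of $E$ has even multiplicity in $A$. Because each element of $A$ is standard and normalized, its only factor is itself and it equals its own normal form; hence $\factors(\Xor{A}) = A$, and the normal-form rule for non-standard terms gives directly that $\nf{\Xor{A}} = \Xor{A'}$ (the conventions $\Xor{\embrace{t}} = t$ and $\Xor{\emptyset} = 0$ take care of the degenerate cases where $A'$ is a singleton or empty). In particular this is an equality of terms, so $f(\nf{\Xor{A}}) = f(\Xor{A'})$.

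Next I would invoke the hypothesis. Since $A'$ is itself a multiset (in fact a set) of standard normalized terms, the assumed identity yields $f(\Xor{A'}) = \nf{\Xor{f(A')}}$. Combining with the previous step gives $f(\nf{\Xor{A}}) = \nf{\Xor{f(A')}}$, so the whole statement follows once I establish that $\nf{\Xor{f(A')}} = \nf{\Xor{f(A)}}$.

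For that last equality I would argue at the level of factor parities, since the normal form of an $\scxor$ depends only on the parity of the count of each normalized value among its factors. Applying $f$ elementwise to $A = A' \Cup E$ gives $f(A) = f(A') \Cup f(E)$ as multisets. The crucial observation is that $f(E)$ again has only even multiplicities: for any term $w$, $\occ{w}{f(E)} = \sum_{e\,:\,f(e) = w} \occ{e}{E}$ is a sum of even numbers and hence even. Consequently each $w \in f(E)$ contributes an even multiple of $\factors(w)$ to $\factors(\Xor{f(A)})$, leaving the parity of every normalized value unchanged. Therefore $\Xor{f(A)}$ and $\Xor{f(A')}$ have identical normal forms, completing the chain $f(\nf{\Xor{A}}) = f(\Xor{A'}) = \nf{\Xor{f(A')}} = \nf{\Xor{f(A)}}$.

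The main obstacle is precisely this final parity argument, and in particular the point where $f$ may fail to be injective: one must check that the even-multiplicity remainder $E$ stays even-multiplicity after $f$ is applied, even though $f$ can collapse distinct terms to the same value, which works only because a sum of even numbers is even. By contrast, the reduction $\nf{\Xor{A}} = \Xor{A'}$ and the application of the hypothesis to $A'$ are routine consequences of the normal-form definition and require no further work.
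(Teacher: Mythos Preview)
Your proof is correct and follows essentially the same approach as the paper's own proof: both split $A$ into the set of terms with odd multiplicity (your $A'$, the paper's $B$) and an even-multiplicity remainder (your $E$, the paper's $A\setminus B$), apply the hypothesis to the odd part, and then use the parity argument --- a sum of even numbers is even, so $f(E)$ retains only even multiplicities --- to conclude that the remainder does not affect the normal form. The only differences are cosmetic (your phrasing via factors versus the paper's direct computation of $\occ{c}{f(A)}\bmod 2$).
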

\begin{proof}
    Since each element of $A$ is normalized, we have $\nf{\Xor{A}} = \Xor{B}$ for a \emph{set} $B \subseteq A$ such that $\occ{b}{B} = \occ{b}{A} \bmod 2$ for all $b$. So $f\left(\nf{\Xor{A}}\right) = f\left(\Xor{B}\right) = \nf{\Xor{f(B)}} = \Xor{}C$ for a \emph{set} $C$ such that $\occ{c}{C} = \occ{c}{f(B)} \bmod 2$ for all $c$. 
    
    For any $a \in A$, we have that $\occ{a}{A\setminus{B}} = \occ{a}{A} - \occ{a}{B} = \occ{a}{A} - (\occ{a}{A} \bmod 2)$ is an even number. Suppose, for a given $c$, that $\{a_{1}, \ldots, a_{n}\}$ is the \emph{set} of all $a \in A\setminus{B}$ s.t.\ $f(a) = c$. Then we have the following. 
    {
        \arraycolsep=2pt\def\arraystretch{1.1}
        \begin{longtable}{CL}
              & \occ{c}{f(A)} \bmod 2 \\
            = & \occ{c}{f(B)} \bmod 2 + \occ{c}{f(A\setminus{B})} \bmod 2 \\
            = & \occ{c}{f(B)} \bmod 2 + \occ{a_{1}}{A\setminus{B}} \bmod 2 + \cdots + \occ{a_{n}}{A\setminus{B}} \bmod 2 \\
            = & \occ{c}{f(B)} \bmod 2 + 0 + \cdots + 0 \\
            = & \occ{c}{f(B)} \bmod 2. \\
        \end{longtable}
    }
    From this it follows that $c \in C$ iff $\occ{c}{f(A)}$ is odd. In other words, $\Xor{C} = \nf{\Xor{f(A)}}$. Thus we have shown that $f\left(\nf{\Xor{A}}\right) = \nf{\Xor{f(A)}}$.
\end{proof} 

\begin{corollary}\label{cor:normintsub}
    For any multiset of standard normalized terms $\{t_{1}, \ldots, t_{n}\}$, we have: 
    \begin{enumerate}
        \item $\nf{\nf{t_{1} \xor \cdots \xor t_{n}}\,\intsub} = \nf{\nf{t_{1}\intsub} \xor \cdots \xor \nf{t_{n}\intsub}}$.
        \item $\zap{\nf{t_{1} \xor \cdots \xor t_{n}}} = \nf{\zap{t_{1}} \xor \cdots \xor \zap{t_{n}}}$.
    \end{enumerate}
\end{corollary}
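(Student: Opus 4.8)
The plan is to obtain both identities as instances of Lemma~\ref{lem:funcnormA}. For each part I will exhibit a function $f$ on terms satisfying the hypothesis $f(\Xor{M}) = \nf{\Xor{f(M)}}$ for every multiset $M$ of standard normalized terms, and then instantiate the lemma at $A = \embrace{t_1, \ldots, t_n}$; since the $t_i$ are standard and normalized, the conclusion $f(\nf{\Xor{A}}) = \nf{\Xor{f(A)}}$ is exactly the asserted equation once $f$ is unfolded.

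For part (1), I would take $f(t) \coloneq \nf{t\intsub}$. Then $f(\nf{\Xor{A}}) = \nf{\nf{t_1 \xor \cdots \xor t_n}\intsub}$ and $\nf{\Xor{f(A)}} = \nf{\nf{t_1\intsub} \xor \cdots \xor \nf{t_n\intsub}}$, so the lemma delivers precisely the desired identity. The real content is the hypothesis check. Since substitution distributes over $\xor$, we have $f(\Xor{M}) = \nf{(\Xor{M})\intsub} = \nf{\Xor{M\intsub}}$, and the hypothesis amounts to $\nf{\Xor{M\intsub}} = \nf{\Xor{\nf{M\intsub}}}$, i.e.\ that normalizing an $\xor$ is insensitive to first normalizing each summand. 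This follows from the last row of Table~\ref{tab:nfst}: the normal form of a non-standard term is determined, for each normalized standard $v$, by the parity of $\sum_{u,\ \nf{u}=v}\occ{u}{\factors(w)}$, and since the factors of $\nf{w}$ are exactly the normalized $v$ whose weighted parity in $\factors(w)$ is odd, replacing each summand $w = m\intsub$ by $\nf{w}$ preserves all these parities.

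For part (2), I would take $f(t) \coloneq \zap{t}$; the lemma then gives $\zap{\nf{t_1 \xor \cdots \xor t_n}} = \nf{\zap{t_1} \xor \cdots \xor \zap{t_n}}$. Here the hypothesis is essentially built into Definition~\ref{def:zap}. For a multiset $M$ with $\card{M} \geq 2$, the term $\Xor{M}$ is non-standard, hence not zappable (zappability, per Definition~\ref{def:zappable}, is defined only for standard terms), so the third clause of the zap definition applies verbatim and yields $\zap{\Xor{M}} = \nf{\Xor{\zap{M}}} = \nf{\Xor{f(M)}}$. The degenerate cases are dispatched by hand: for $M = \emptyset$ both sides are $0$ (using $\Xor{\emptyset} = 0$ and $\zap{0} = 0$), and for $M = \embrace{t}$ both sides equal $\zap{t}$ (using $\Xor{\embrace{t}} = t$ and that $\zap{t}$ is already normalized).

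I expect the only genuine obstacle to be the hypothesis check in part (1), namely the congruence $\nf{\Xor{M\intsub}} = \nf{\Xor{\nf{M\intsub}}}$, which needs the short parity-counting argument against Table~\ref{tab:nfst} sketched above. By contrast, part (2)'s hypothesis is a one-line consequence of the zap definition, and in both parts the passage from the verified hypothesis to the stated equation is nothing more than an invocation of Lemma~\ref{lem:funcnormA}.
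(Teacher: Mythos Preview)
Your proposal is correct and follows essentially the same approach as the paper: both parts are obtained by instantiating Lemma~\ref{lem:funcnormA} with $f(t)=\nf{t\intsub}$ and $f(t)=\zap{t}$ respectively, after checking the hypothesis $f(\Xor{M})=\nf{\Xor{f(M)}}$. Your treatment is in fact slightly more careful than the paper's---you spell out the parity argument behind $\nf{\Xor{M\intsub}}=\nf{\Xor{\nf{M\intsub}}}$ and handle the degenerate cases $\card{M}\le 1$ for the zap hypothesis, whereas the paper simply asserts the first equality and appeals to the definition for the second.
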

\begin{proof}
    \phantom{a}
    \begin{enumerate}
        \item Note that $\nf{(t_{1} \xor \cdots \xor t_{n})\,\intsub} = \nf{t_{1}\intsub \xor \cdots \xor t_{n}\intsub} = \nf{\nf{t_{1}\intsub} \xor \cdots \xor \nf{t_{n}\intsub}}$, so the function $f \coloneq \lambda{t}\cdot{}\nf{t\intsub}$ satisfies the condition in Lemma~\ref{lem:funcnormA}. Applying the lemma, we get 
        \[ 
            \nf{\nf{t_{1} \xor \cdots \xor t_{n}}\,\intsub} = \nf{\nf{t_{1}\intsub} \xor \cdots \xor \nf{t_{n}\intsub}}. 
        \]
        \item By definition, $\zap{t_{1} \xor \cdots \xor t_{n}} = \nf{\zap{t_{1}} \xor \cdots \xor \zap{t_{n}}}$, so the function $f \coloneq \lambda{t}\cdot{}\zap{t}$ satisfies the condition in Lemma~\ref{lem:funcnormA}. Applying the lemma, we get the desired result.
        \qedhere
    \end{enumerate}
\end{proof}

\subsection{Typed terms in proofs}\label{subsec:typed-in-proofs}

\begin{lemma}\label{lem:nfintsub-derive}
    Let $X \cup \{t\}$ be a finite set of terms, and let $\pi$ be a proof of $X \vdash t$. There is a proof $\pi'$ of $\nf{X\intsub} \vdash \nf{t\intsub}$, which is the same proof as $\pi$ except that each node is labelled $(\nf{s\intsub}, \rnrule)$ instead of $(s,\rnrule)$. 
\end{lemma}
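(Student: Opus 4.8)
The plan is to prove this by induction on the structure of the proof $\pi$, transforming each rule instance locally. The key claim is that applying the map $s \mapsto \nf{s\intsub}$ node-by-node turns each valid rule instance of Table~\ref{tab:termalgtab} into another valid instance of the \emph{same} rule. Since $\pi'$ has exactly the same tree shape as $\pi$, once I verify that every parent/children relationship remains a legitimate rule application after the relabelling, the result follows immediately: axioms map to axioms (so $\axiomsof(\pi') = \nf{\axiomsof(\pi)\intsub} \subseteq \nf{X\intsub}$), and the root becomes $\nf{t\intsub}$.

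\medskip

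\noindent\textbf{Case analysis on the rule.} For the non-$\scxor$ rules, the argument is routine because normalization commutes with the structural operators and with substitution in a transparent way. For instance, a $\rnsenc$ step deriving $\sencfn{u}{v}$ from $u$ and $v$ becomes a step deriving $\nf{\sencfn{u}{v}\intsub} = \sencfn{\nf{u\intsub}}{\nf{v\intsub}}$ from $\nf{u\intsub}$ and $\nf{v\intsub}$, which is again a valid $\rnsenc$ instance; the equality here uses that $\sencfn{u}{v}$ is standard, so $\nf{\sencfn{u}{v}\intsub} = \sencfn{\nf{u\intsub}}{\nf{v\intsub}}$ by the last-but-one row of Table~\ref{tab:nfst}. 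The destructor rules ($\rnsplit$, $\rnsdec$, $\rnadec$) and the remaining constructors ($\rnpk$, $\rnpair$, $\rnaenc$) are entirely analogous, each relying only on the fact that normalization distributes over a standard head symbol and commutes with $\intsub$. I should also note that all conclusions of $\pi'$ are normalized by construction, so $\pi'$ satisfies the formal requirement that each node carry a normalized term.

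\medskip

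\noindent\textbf{The $\scxor$ rule is the crux.} The one genuinely nontrivial case is $\rnxor$: here $\pi$ derives $\nf{t_{1} \xor \cdots \xor t_{n}}$ from premises $t_{1}, \ldots, t_{n}$, and I must check that the relabelled step derives $\nf{(\nf{t_{1} \xor \cdots \xor t_{n}})\intsub}$ from $\nf{t_{1}\intsub}, \ldots, \nf{t_{n}\intsub}$ via a \emph{single} $\rnxor$ instance. A valid $\rnxor$ instance on those premises has conclusion $\nf{\nf{t_{1}\intsub} \xor \cdots \xor \nf{t_{n}\intsub}}$, so I need
\[
    \nf{(\nf{t_{1} \xor \cdots \xor t_{n}})\intsub} = \nf{\nf{t_{1}\intsub} \xor \cdots \xor \nf{t_{n}\intsub}}.
\]
This is exactly part~(1) of Corollary~\ref{cor:normintsub}, applied to the multiset $\{t_{1}, \ldots, t_{n}\}$ of standard normalized terms (the premises are normalized since every node of $\pi$ carries a normalized term, and they may be taken standard up to flattening, which does not affect the normal form). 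Thus the $\scxor$ step is rescued precisely by the commutation lemma proved earlier, and this is the step where the work has already been done. With every rule case handled, the induction closes and $\pi'$ is the desired proof of $\nf{X\intsub} \vdash \nf{t\intsub}$.
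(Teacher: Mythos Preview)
Your proposal is correct and follows essentially the same route as the paper: a node-local check that each rule instance remains valid under the relabelling $s \mapsto \nf{s\intsub}$, with the constructor/destructor cases handled by the fact that normalization distributes over standard heads, and the $\rnxor$ case discharged by Corollary~\ref{cor:normintsub}(1). Your parenthetical remark about flattening non-standard premises into their standard factors is a slight refinement over the paper, which applies the corollary directly without commenting on the standardness hypothesis; both are fine, since the identity extends to arbitrary normalized premises by exactly the flattening you mention.
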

\begin{proof}
    Let $\pi$ and $\pi'$ be as in the statement of the lemma. We need to show that $\pi'$ is indeed a proof. For that we need to show that if $s$ is obtained from premises $s_{1}, \ldots, s_{n}$ via the rule $\rnrule$, then $\nf{s\intsub}$ can be obtained from $\nf{s_{1}\intsub}, \ldots \nf{s_{n}\intsub}$. We just show it for a few representative rules. 
    \begin{description}
        \item[$\rnrule = \rnsenc$:] $\nf{\sencfn{u}{v}\intsub} = \nf{\sencfn{u\intsub}{v\intsub}} = \sencfn{\nf{u\intsub}}{\nf{v\intsub}}$, so $\rnsenc$ applied on $\nf{u\intsub}$ and $\nf{v\intsub}$ will produce $\nf{\sencfn{u}{v}\intsub}$, as desired. 
        \item[$\rnrule = \rnsdec$:] As seen above, $\nf{\sencfn{u}{v}\intsub} = \sencfn{\nf{u\intsub}}{\nf{v\intsub}}$, so $\rnsdec$ applied on $\nf{\sencfn{u}{v}\intsub} $ and $\nf{v\intsub}$ will produce $\nf{u\intsub}$, as desired.
        \item[$\rnrule = \rnxor$:] In this case $s = \nf{s_{1} \xor \cdots \xor s_{n}}$. From Corollary~\ref{cor:normintsub}, $\nf{s\intsub} = \nf{\nf{s_{1}\intsub} \xor \cdots \xor \nf{s_{n}\intsub}}$, and so $\rnxor$ applied on $\nf{s_{1}\intsub}, \ldots, \nf{s_{n}\intsub}$ will produce $\nf{s\intsub}$, as desired.
        \qedhere
    \end{description}
\end{proof}

\begin{corollary}\label{cor:whconst-proofs}
    If $X \cup \{t\} \subseteq \constst$ and $X \DYderives t$, there is a proof $\pi^{*}$ of $\nf{X\intsub} \vdash \nf{t\intsub}$ such that $\termsof(\pi^{*}) \subseteq \nfintconst$. Furthermore, $\concof(\delta^{*}) \in \nfintnonvars$ for every subproof $\delta^{*}$ of $\pi^{*}$ whose last rule is $\rnpair, \rnaenc$ or $\rnsenc$. 
\end{corollary}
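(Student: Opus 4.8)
The plan is to obtain $\pi^{*}$ by first cleaning up $\pi$ and then pushing $\intsub$ through it via Lemma~\ref{lem:nfintsub-derive}. First I would invoke the Normalization theorem to replace $\pi$ by a \emph{normal} proof of $X \vdash t$; this is harmless, since neither the conclusion $t$ nor the axiom set $\axiomsof(\pi) \subseteq X$ changes. With $\pi$ now normal, the Subterm property theorem gives $\termsof(\pi) \subseteq \subterms{X \cup \{t\}}$. Because $X \cup \{t\} \subseteq \constst$ and $\constst$ is closed under subterms (being itself defined as a set of subterms), this yields $\termsof(\pi) \subseteq \constst$.

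Next I would apply Lemma~\ref{lem:nfintsub-derive} to this normal $\pi$, obtaining the proof $\pi^{*}$ of $\nf{X\intsub} \vdash \nf{t\intsub}$ whose node labels are exactly $(\nf{s\intsub}, \rnrule)$ for each corresponding node $(s, \rnrule)$ of $\pi$. The first claim is then immediate: every term of $\pi^{*}$ has the form $\nf{s\intsub}$ for some $s \in \termsof(\pi) \subseteq \constst$, and hence lies in $\nf{\constst\intsub} = \nfintconst$. So $\termsof(\pi^{*}) \subseteq \nfintconst$.

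For the ``furthermore'' clause I would use that $\pi^{*}$ has exactly the same tree shape and rule labels as $\pi$. A subproof $\delta^{*}$ of $\pi^{*}$ with last rule $\rnpair$, $\rnaenc$, or $\rnsenc$ therefore corresponds to a subproof $\delta$ of $\pi$ with the same last rule, whose conclusion $s$ is accordingly of the form $\pairfn{u}{v}$, $\aencfn{u}{\pkfn{k}}$, or $\sencfn{u}{v}$; in particular $s$ is standard and not a variable. Since $s \in \termsof(\pi) \subseteq \constst$, this places $s \in \stnonvars$, and therefore $\concof(\delta^{*}) = \nf{s\intsub} \in \nf{\stnonvars\intsub} = \nfintnonvars$, as required.

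I do not expect a genuine obstacle here; the argument is essentially bookkeeping over the three prior results. The one point needing a little care is that normalizing after substitution preserves the top constructor for these three rules (e.g.\ $\nf{\pairfn{u}{v}\intsub} = \pairfn{\nf{u\intsub}}{\nf{v\intsub}}$), so that each such rule occurrence in $\pi^{*}$ remains valid and its conclusion remains a standard non-variable term. This is precisely what lets the membership $s \in \stnonvars$ transfer to $\nf{s\intsub} \in \nfintnonvars$.
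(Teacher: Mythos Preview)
Your proposal is correct and follows essentially the same route as the paper: normalize, apply the subterm property to get $\termsof(\pi)\subseteq\constst$, push $\intsub$ through via Lemma~\ref{lem:nfintsub-derive}, and then read off both claims from the node-by-node correspondence between $\pi$ and $\pi^{*}$. The paper's argument is slightly terser but identical in substance; your closing remark about preservation of the top constructor under $\nf{\cdot\,\intsub}$ is a helpful sanity check that the paper leaves implicit.
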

\begin{proof}
    Since $X \DYderives t$, there is a normal proof $\pi$ of $X \vdash t$. By the previous lemma, we get a proof $\pi^{*}$ of $\nf{X\intsub} \vdash \nf{t\intsub}$ s.t.\ each $u \in \termsof(\pi^{*})$ is of the form $\nf{r\intsub}$ for some $r \in \termsof(\pi)$. Since $\pi$ is normal and $X \cup \{t\} \subseteq \constst$, the subterm property guarantees that $\termsof(\pi) \subseteq \subterms{X \cup \{t\}} \subseteq \constst$. It follows that $\termsof(\pi^{*}) \subseteq \nfintconst$. 

    Consider a subproof $\delta^{*}$ as in the statement of the lemma. It is the translation of a subproof $\delta$ of $\pi$ with $\lrof(\delta) = \lrof(\delta^{*})$. It is clear that $\concof(\delta) \in \stnonvars$, and hence $\concof(\delta^{*}) \in \nfintnonvars$. 
\end{proof}

The next two lemmas hinge on the fact that a run is an interleaving of sessions of a protocol. 

\begin{lemma}\label{lem:sigmax-in-sigmaTp}\!\!\footnote{This is the only place where we use the fact that honest agent sends are derivable from previously received terms in the session, via Observation~\ref{obs:honestsends}. If we drop this feature of our model, then we can conclude that $t \in \subterms{\intsub(r_{k})}$, but not the stronger property that $t \in \subterms{\nf{\intsub(r_{k})}}$.  In~\cite{CKRT05}, they achieve this by imposing restrictions on the class of protocols admitted, and by a beautiful but involved proof (Lemma~13 in their paper).}
    Suppose $i \leq n$ and $t$ is a standard normalized term s.t.\ $t \in \subterms{\nf{X_{i}\intsub}}$. Then $t$ is typed or there is a $k \leq i$ such that $t \in \subterms{\nf{r_{k}\intsub}}$.
\end{lemma}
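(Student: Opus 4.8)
The plan is to peel the hypothesis apart and then induct over a derivation of each honest send. Since $X_i = X_0 \cup \embrace{s_1, \ldots, s_i}$, we have $\nf{X_i\intsub} = \nf{X_0\intsub} \cup \embrace{\nf{s_1\intsub}, \ldots, \nf{s_i\intsub}}$, so a standard normalized $t \in \subterms{\nf{X_i\intsub}}$ lies either in $\subterms{\nf{X_0\intsub}}$ or in $\subterms{\nf{s_m\intsub}}$ for some $m \leq i$. The first case is immediate: $X_0$ is normalized and ground with $X_0 \subseteq \constst$, so $\nf{X_0\intsub} = X_0$, and every standard subterm of a term of $X_0$ is a ground normalized element of $\constst$, hence lies in $\stnonvars$ and equals $\nf{t\intsub} \in \nfintnonvars$, i.e.\ is typed. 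It thus suffices to fix $m \leq i$, take $t \in \subterms{\nf{s_m\intsub}}$, and show that $t$ is typed or lies in $\subterms{\nf{r_k\intsub}}$ for some $k \leq m$ (so $k \leq i$).

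For the send $s_m$, Observation~\ref{obs:honestsends} gives a set of ground terms $Y \subseteq \stnonvars$ with $Y \cup \embrace{r_1, \ldots, r_m} \DYderives s_m$. As $Y \cup \embrace{r_1, \ldots, r_m, s_m} \subseteq \constst$, Corollary~\ref{cor:whconst-proofs} produces a proof $\pi^{*}$ of $Y \cup \embrace{\nf{r_1\intsub}, \ldots, \nf{r_m\intsub}} \vdash \nf{s_m\intsub}$ (using $\nf{Y\intsub} = Y$) with $\termsof(\pi^{*}) \subseteq \nfintconst$ and with $\concof(\delta^{*}) \in \nfintnonvars$ for every subproof $\delta^{*}$ ending in $\rnpair, \rnaenc$ or $\rnsenc$; moreover each node of $\pi^{*}$ is labelled $\nf{u\intsub}$ for a corresponding normalized $u \in \termsof(\pi) \subseteq \constst$ of the underlying normal proof $\pi$. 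I then prove, by induction on the structure of $\pi^{*}$, that for every subproof with conclusion $c$, every standard normalized $t \in \subterms{c}$ is typed or lies in $\subterms{\nf{r_k\intsub}}$ for some $k \leq m$. Applying this at the root (conclusion $\nf{s_m\intsub}$) finishes the proof.

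The routine cases are as follows. At a leaf the conclusion is either a term of $Y$, whose standard subterms are ground normalized members of $\stnonvars$, hence typed, or some $\nf{r_k\intsub}$ with $k \leq m$, giving $t \in \subterms{\nf{r_k\intsub}}$ directly. If the last rule is a destructor ($\rnsplit, \rnsdec, \rnadec$), then $c$ is a subterm of the leftmost premise $p$, so $\subterms{c} \subseteq \subterms{p}$ and the induction hypothesis for that immediate subproof applies (this needs no normality, only the shape of the rules). If the last rule is $\rnpair, \rnsenc$ or $\rnaenc$, then $c \in \nfintnonvars$ is typed by Corollary~\ref{cor:whconst-proofs}, and every other standard subterm of $c$ lies in a premise and is covered by the induction hypothesis. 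The $\rnpk$ case is handled by provenance: the corresponding $u$ is $\pkfn{k}$ with $k \in \keys \subseteq \names$, and since names are fixed by $\intsub$ the node still concludes $\pkfn{k} \in \constst$, a standard non-variable term, so $\pkfn{k} = \nf{\pkfn{k}\intsub} \in \nfintnonvars$ is typed, while its subterm $k$ is handled by the induction hypothesis on the premise.

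The crux is the $\rnxor$ case, and it is the one place where the structure of $\xor$-normal forms, rather than proof normality, does the work. Let the premises conclude $c_1, \ldots, c_l$, so $c = \nf{c_1 \xor \cdots \xor c_l}$; every factor of $c$ is a factor of some $c_j$, and $\factors(c_j) \subseteq \subterms{c_j}$. If $c$ is non-standard, any standard $t \in \subterms{c}$ is distinct from $c$ and lies in $\subterms{u}$ for some $u \in \factors(c) \subseteq \subterms{c_j}$, so the induction hypothesis for premise $j$ applies. If $c$ is standard with $c \neq 0$, then $\factors(c) = \embrace{c}$ and $c$ survives the cancellation, so $c \in \factors(c_j)$ for some $j$; whether $c_j$ is standard (forcing $c_j = c$) or non-standard (giving $c \in \factors(c_j) \subseteq \subterms{c_j}$), we get $\subterms{c} \subseteq \subterms{c_j}$ and conclude by the induction hypothesis. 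The degenerate conclusion $c = 0$ is atomic and, as a standard constant of $\constst$, is typed. Crucially this argument tolerates a non-normal $\pi^{*}$: a premise equal to the conclusion is harmless, since it merely routes the induction hypothesis through that premise. The main obstacle, and the reason Observation~\ref{obs:honestsends} is indispensable, is obtaining the derivation of $s_m$ from $Y$ and the \emph{received} terms $r_1, \ldots, r_m$: this is exactly what makes the leaves of $\pi^{*}$ typed or subterms of some $\nf{r_k\intsub}$, yielding the strong conclusion $t \in \subterms{\nf{r_k\intsub}}$ rather than the weaker $t \in \subterms{\intsub(r_k)}$ noted in the footnote.
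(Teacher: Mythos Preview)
Your proof is correct and follows essentially the same route as the paper: split off the $X_{0}$ case, invoke Observation~\ref{obs:honestsends} and Corollary~\ref{cor:whconst-proofs} to obtain the transported proof $\pi^{*}$, and then analyse $\pi^{*}$. The only cosmetic difference is that the paper packages the analysis as a \emph{minimal-subproof} argument (pick a smallest subproof whose conclusion contains $t$, then rule out destructor and $\rnxor$ last rules because they would push $t$ into a proper subproof), whereas you unfold the same content as a structural induction on $\pi^{*}$ with an explicit case split; the two are interchangeable, and your observation that normality of $\pi^{*}$ is not needed applies equally to the paper's version.
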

\begin{proof}
    Suppose $t \in \subterms{\nf{u\intsub}}$ for some $u \in X_{i}$. If $u \in X_{0}$, then since every term in $X_{0}$ is normalized and ground, $u$ and all its subterms are in $\stnonvars$, so $t = \nf{t\intsub} \in \nfintnonvars$, and hence $t$ is typed. So suppose $u \notin X_{0}$, i.e. $u = s_{j}$ for some $j \leq i$. So $t \in \subterms{\nf{s_{j}\intsub}}$. By Observation~\ref{obs:honestsends}, there is a set of ground terms $Y \subseteq \stnonvars$ s.t.\ $Y \cup \{r_{1}, \ldots, r_{j}\} \DYderives s_{j}$. Hence there is a normal proof $\pi$ of the same, and by Corollary~\ref{cor:whconst-proofs}, a proof $\pi^{*}$ of $U \vdash \nf{s_{j}\intsub}$ with $\termsof(\pi^{*}) \subseteq \nfintconst$, where $U = Y \cup \{\nf{r_{1}\intsub}, \ldots, \nf{r_{j}\intsub}\}$.  

    Since $t \in \subterms{\concof(\pi^{*})}$, we can consider a minimal subproof $\delta$ s.t.\ $t \in \subterms{\concof(\delta)}$. If $\lrof(\delta) = \rnax$, then $t \in \subterms{U}$, so either $t \in \subterms{Y} \subseteq \nfintnonvars$ (and hence $t$ is typed), or $t \in \subterms{\nf{r_{k}\intsub}}$ for some $k \leq j \leq i$. Note that $\lrof(\delta)$ cannot be any other destructor rule or an $\rnxor$ rule, since in those cases, $t \in \subterms{\concof(\delta')}$ for a proper subproof $\delta'$ of $\delta$, contradicting the minimality of $\delta$. So it has to be that $\lrof(\delta) \in \{\rnpair, \rnsenc, \rnaenc\}$ and $t = \concof(\delta)$. By Corollary~\ref{cor:whconst-proofs}, we conclude that $t \in \nfintnonvars$. 
    
    Thus in all cases we see that $t$ is typed or that $t \in \subterms{\nf{\intsub(r_{k})}}$ for some $k \leq i$. 
\end{proof}

\begin{lemma}\label{lem:earlier-proof}
    Suppose $i \leq n$, $t$ is a standard normalized term such that $\nf{X_{i}\intsub} \DYderives t$ via a normal proof $\pi$ ending in a destructor rule. Then $t$ is typed or there is an $\ell < i$ such that $\nf{X_{\ell}\intsub} \DYderives t$.
\end{lemma}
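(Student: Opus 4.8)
The plan is to reduce the statement to a subterm analysis of the received messages and then to an induction. Since $\pi$ is normal and ends in a destructor, the Subterm property theorem gives $\termsof(\pi) \subseteq \subterms{\nf{X_i\intsub}}$, so in particular $t = \concof(\pi) \in \subterms{\nf{X_i\intsub}}$. Being a standard normalized term, $t$ falls under Lemma~\ref{lem:sigmax-in-sigmaTp}: either $t$ is typed, in which case we are immediately done, or there is a $k \leq i$ with $t \in \subterms{\nf{r_k\intsub}}$. In the latter case the run condition of Definition~\ref{def:protruns} supplies a derivation $\nf{X_{k-1}\intsub} \DYderives \nf{r_k\intsub}$, and since $k-1 < i$ it suffices to establish $\nf{X_{k-1}\intsub} \DYderives t$, taking $\ell = k-1$.

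The engine of the proof is the following sub-claim, which I would prove by strong induction on $k$: if $t$ is a standard normalized term that is \emph{not} typed and $t \in \subterms{\nf{r_k\intsub}}$, then $\nf{X_{k-1}\intsub} \DYderives t$. I would fix a normal proof $\rho$ of $\nf{X_{k-1}\intsub} \DYderives \nf{r_k\intsub}$ and let $\delta$ be a \emph{minimal} subproof of $\rho$ whose conclusion has $t$ as a subterm (such a $\delta$ exists since $t \in \subterms{\concof(\rho)}$). The heart of the argument is then to pin down $\lrof(\delta)$.

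The case analysis runs as follows. If $\lrof(\delta)$ is any destructor ($\rnsplit, \rnsdec, \rnadec$, or $\rnxord$), then $\concof(\delta)$ is a subterm of one of its premises, so $t$ would already be a subterm of a child's conclusion, contradicting minimality. If $\lrof(\delta) = \rnxorc$, then $\concof(\delta)$ is non-standard while $t$ is standard, so $t \neq \concof(\delta)$; and by Observation~\ref{obs:lrxor} every factor of $\concof(\delta)$ is a subterm of some premise, so again $t$ would sit inside a child's conclusion, a contradiction. Hence $\lrof(\delta)$ is either $\rnax$ or a constructor. In the constructor case the same subterm reasoning forces $t = \concof(\delta)$, so $\delta$ itself witnesses $\nf{X_{k-1}\intsub} \DYderives t$. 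In the axiom case $t \in \subterms{\nf{X_{k-1}\intsub}}$; applying Lemma~\ref{lem:sigmax-in-sigmaTp} with $k-1$ in place of $i$ (and using that $t$ is not typed) yields $t \in \subterms{\nf{r_{k'}\intsub}}$ for some $k' \leq k-1 < k$, and the induction hypothesis gives $\nf{X_{k'-1}\intsub} \DYderives t$, whence $\nf{X_{k-1}\intsub} \DYderives t$ by monotonicity. The base case $k=1$ poses no difficulty: there the axiom branch would, via Lemma~\ref{lem:sigmax-in-sigmaTp} with $i=0$, force $t$ to be typed (there is no $r_{k'}$ with $k' \leq 0$), contradicting the hypothesis, so only the constructor branch survives.

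I expect the main obstacle to be this case analysis on $\lrof(\delta)$, specifically ruling out $\rnxorc$: one must argue carefully that the surviving factors of an $\scxor$ conclusion are subterms of the premises (leaning on Observation~\ref{obs:lrxor} and the definition of $\factors$), so that minimality of $\delta$ is genuinely violated. Setting up the induction so that the axiom branch strictly decreases the index $k$, and checking that the not-typed hypothesis is faithfully carried across the recursion, is the other point that needs care.
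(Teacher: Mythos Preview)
Your proposal is correct and is essentially the same argument as the paper's, with one structural difference worth noting. The paper avoids your strong induction on $k$ by simply taking $j$ to be the \emph{earliest} index with $t \in \subterms{\nf{r_{j}\intsub}}$. Then, in the analysis of the minimal subproof $\chi$, the paper lumps $\rnax$ together with the destructor rules: by the Subterm Property, $\concof(\chi) \in \subterms{\nf{X_{j-1}\intsub}}$, so $t \in \subterms{\nf{X_{j-1}\intsub}}$, and Lemma~\ref{lem:sigmax-in-sigmaTp} yields $t \in \subterms{\nf{r_{k}\intsub}}$ for some $k \le j-1$, contradicting the minimality of $j$ outright. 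Your version instead rules out the non-$\rnax$ destructors by the minimality of $\delta$ and handles $\rnax$ via the inductive hypothesis; this is a perfectly valid rearrangement of the same ideas, just slightly less economical.

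Two small points of care in your write-up. First, for $\rnxord$ the claim ``$\concof(\delta)$ is a subterm of one of its premises'' is not immediate from the rule itself; it is exactly the second sentence of Observation~\ref{obs:lrxor}(2a), which you should cite there too. Second, for $\rnxorc$ your appeal to Observation~\ref{obs:lrxor} is slightly off: that observation speaks about where the \emph{premises} end up, not where the factors of the conclusion come from. What you actually need is the elementary fact that every factor of $\nf{t_{1}\xor\cdots\xor t_{n}}$ already occurs in $\bigcup_{i}\factors(t_{i})$, which follows directly from the definition of $\nf{\cdot}$. Neither point is a gap, but both deserve a precise justification when you flesh out the argument.
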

\begin{proof}
  	Since $\pi$ ends in a destructor rule, $t \in \subterms{\nf{X_{i}\intsub}}$. By Lemma~\ref{lem:sigmax-in-sigmaTp}, either $t$ is typed, or $t$ is untyped and there is an $i' \leq i$ such that $t \in \subterms{\nf{r_{i'}\intsub}}$. In the latter case, let $j$ be the earliest index such that $t \in \subterms{\nf{r_{j}\intsub}}$. Now $\nf{X_{j-1}\intsub} \DYderives \nf{r_{j}\intsub}$ via a normal proof $\rho$. Consider a minimal subproof $\chi$ of $\rho$ such that $t \in \subterms{\concof(\chi)}$. (There is at least one such subproof, namely $\rho$.) If $\chi$ ends in a destructor, then $\concof(\chi) \in \subterms{\nf{X_{j-1}\intsub}}$, and hence $t \in \subterms{\nf{X_{j-1}\intsub}}$. But by Lemma~\ref{lem:sigmax-in-sigmaTp}, there must be a $k \leq j-1$ such that $t \in \subterms{\nf{r_{k}\intsub}}$, contradicting the fact that $j$ is the earliest such index. So $\chi$ ends in a constructor rule. If $t \neq \concof(\chi)$, then either $\concof(\chi)$ is standard and $t$ is a proper subterm, or $\concof(\chi)$ is non-standard and $t$ is a subterm of a proper factor. In either case, it follows follows that $t \in \subterms{\concof(\chi')}$ for a proper subproof $\chi'$ of $\chi$. But this cannot be, since $\chi$ was assumed to be a minimal proof whose conclusion has $t$ as a subterm. Thus, $t = \concof(\chi)$ and $\chi$ is a proof of $\nf{X_{j-1}\intsub} \vdash t$ (and we choose our $\ell$ to be $j-1$, which is strictly less than $i$, since $j \leq i' \leq i$).           
\end{proof}

\subsection{Typed proofs}\label{subsec:typedproofs}

\begin{definition}[Typed derivations]\label{def:typed-proofs}
    A derivation $\pi$ is \defemph{typed} if for each subproof $\pi'$, either $\lrof(\pi')$ is a constructor rule, or $\concof(\pi')$ is typed, or $\concof(\pi')$ is non-standard.
\end{definition}

\begin{theorem}\label{thm:rustur}
	For all $t$ and all $i \in \embrace{0,\ldots,n}$, if $\nf{X_{i}\intsub} \DYderives t$, then there is a typed normal proof of $\nf{X_{i}\intsub} \vdash t$.
\end{theorem}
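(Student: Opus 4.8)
The goal is to show that any derivation from $\nf{X_i\intsub}$ can be replaced by a *typed* normal proof. The plan is to induct on a suitable well-founded measure, converting any subproof that violates the typing condition into a typed one. The typing condition fails exactly when some subproof $\delta$ ends in a *destructor* rule and has a *standard, untyped* conclusion. This is precisely the situation that Lemma~\ref{lem:earlier-proof} is designed to handle, so the strategy is to use that lemma to push such offending derivations to an earlier knowledge set $\nf{X_\ell\intsub}$ with $\ell < i$, and then invoke the induction hypothesis.

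Let me think about the right induction scheme. First I would induct on $i$, the index of the knowledge set. For the base case $i = 0$, note that every term in $X_0$ is normalized and ground, so all subterms of $\nf{X_0\intsub}$ are typed; any axiom is thus typed, and I would argue (by a secondary induction on proof structure) that the only offending subproofs --- destructors with standard untyped conclusion --- cannot arise, since the subterm property forces every conclusion of a destructor to lie in $\subterms{\nf{X_0\intsub}}$, hence to be typed. For the inductive step, assume the theorem holds for all $i' < i$. Given a normal proof $\pi$ of $\nf{X_i\intsub} \vdash t$, I would perform an inner induction on the size (number of nodes) of $\pi$ to make it typed. If $\pi$ is already typed, we are done. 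Otherwise, pick a *minimal* offending subproof $\delta$: it ends in a destructor and $\concof(\delta)$ is standard and untyped. By Lemma~\ref{lem:earlier-proof} applied to $\delta$ (whose conclusion is a standard normalized term derived by a destructor from $\nf{X_i\intsub}$), either $\concof(\delta)$ is typed --- contradicting our choice --- or there is $\ell < i$ with $\nf{X_\ell\intsub} \DYderives \concof(\delta)$.

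In the latter case, I would use the outer induction hypothesis on $\ell$ to obtain a *typed normal* proof $\delta'$ of $\nf{X_\ell\intsub} \vdash \concof(\delta)$. Since $X_\ell \subseteq X_i$, we have $\nf{X_\ell\intsub} \subseteq \nf{X_i\intsub}$, so $\delta'$ is also a valid proof from $\nf{X_i\intsub}$ with the same conclusion as $\delta$. I would then replace the subproof $\delta$ inside $\pi$ by $\delta'$, obtaining a new proof of $\nf{X_i\intsub} \vdash t$ with strictly fewer offending subproofs of this particular kind --- or more carefully, I must ensure the measure genuinely decreases. Here lies the main obstacle: splicing in $\delta'$ may not reduce the node count, and $\delta'$ itself is only guaranteed typed, so the replacement could in principle reintroduce offending nodes elsewhere or break normality. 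I expect to address this by choosing the measure as the multiset of conclusions-of-offending-destructor-subproofs ordered by the index-based potential, using the fact that $\delta'$'s offending subproofs (if any) all have conclusions derivable at levels $\le \ell < i$, so re-normalizing and re-applying the construction strictly decreases a lexicographic measure $(\text{number of untyped destructor conclusions}, \text{proof size})$. After the replacement I would re-normalize using the Normalization theorem to restore normality, noting this does not increase the relevant measure.

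The subtle point I would be most careful about is guaranteeing termination of this rewriting process. The cleanest framing is likely to define the measure as the number of nodes $\delta$ in $\pi$ such that $\lrof(\delta)$ is a destructor and $\concof(\delta)$ is standard untyped, and to show each replacement step strictly decreases this count: the new subproof $\delta'$ is typed, so by definition none of *its* subproofs are destructors-with-standard-untyped-conclusion, and since $\delta'$ sits entirely below the spot where $\delta$ was, no new offending nodes are created above it. Re-normalization only removes or merges nodes and cannot create a standard untyped conclusion where none existed, so the count does not rebound. Thus the process terminates in a typed normal proof, completing both the inner and outer inductions.
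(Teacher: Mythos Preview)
Your proposal is correct and rests on the same two pillars as the paper's proof: the outer induction on $i$, and the appeal to Lemma~\ref{lem:earlier-proof} (plus the outer IH) whenever a destructor produces a standard untyped term. The difference is entirely in how the inner argument is organised. The paper works \emph{top-down} by structural induction on a normal proof $\pi$: at the root, if $\lrof(\pi)$ is a destructor with standard untyped conclusion, it throws $\pi$ away wholesale and takes the typed normal proof supplied by Lemma~\ref{lem:earlier-proof} and the outer IH; otherwise it recursively transforms the immediate subproofs and reassembles, with explicit case analysis (notably for $\rnxord$, where it must flatten the resulting nested $\rnxor$s and drop duplicates by hand). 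Your approach instead works \emph{bottom-up} by iterated surgery: locate an offending subproof $\delta$, graft in the typed normal replacement $\delta'$, re-normalise, and repeat, with termination governed by the count of offending nodes.

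What each buys: the paper's single-pass construction never has to reason about what re-normalisation might do, at the cost of spelling out the reassembly for every rule shape. Your route is more modular---you delegate all the reassembly bookkeeping to the already-proved Normalization theorem---but you then owe an argument that re-normalisation cannot manufacture new offending nodes. You assert this (``re-normalization only removes or merges nodes and cannot create a standard untyped conclusion where none existed'') without proof; it \emph{is} true, and the justification is simply that every rewrite in the Normalization proof deletes nodes and never introduces a conclusion not already present, but you should state that explicitly. A second small point: your inner induction is first announced as ``on the size (number of nodes) of $\pi$'', which you then correctly observe does not work; the proposal would read more cleanly if you committed from the outset to the offending-node count as the inner measure. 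The ``minimality'' of $\delta$ is not actually needed---any offending subproof will do, since subproofs of normal proofs are normal and Lemma~\ref{lem:earlier-proof} applies regardless.
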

\begin{proof}
    Assume the theorem holds for all $j < i$. We show how to transform any normal proof $\pi$ of $\nf{X_{i}\intsub} \vdash t$ ending in rule $\rnrule$ into a typed normal proof $\pi^{*}$ of the same by induction on the structure of $\pi$. 
    \begin{description}
        \item[$\rnrule = \rnax$:] In this case, and in later cases where $\rnrule$ is a destructor, if $t$ is standard and non-typed, we argue as follows. By Lemma~\ref{lem:earlier-proof}, there is a $j < i$ and a proof $\rho$ of $\nf{X_{j}\intsub} \vdash t$. Since the statement of the theorem holds for all $j < i$, there is a typed normal proof $\rho^{*}$ of $\nf{X_{j}\intsub} \vdash t$, and it is also a proof of $\nf{X_{i}\intsub} \vdash t$. We define $\pi^{*} \coloneq \rho^{*}$. 
        
        If $t$ is non-standard or typed, it follows that $\pi$ itself is typed normal, and we just define $\pi^{*} \coloneq \pi$.
        \item[$\rnrule$ is a constructor:] We can find typed normal equivalents for all immediate subproofs, and apply the same constructor rule to get the desired $\pi^{*}$. 
        
        \item[$\rnrule$ is a destructor other than $\rnxord$:] If $t$ is standard and non-typed, we define $\pi^{*}$ as in the $\rnax$ case. 
        
        Otherwise, let $\pi_{1}$ and $\pi_{2}$ be immediate subproofs of $\pi$, with $\concof(\pi_{1}) = s$, and $t$ an immediate subterm of $s$. We can find typed normal equivalents $\pi^{*}_{1}$ and $\pi^{*}_{2}$. If $\pi^{*}_{1}$ ends in a constructor, then we choose $\pi^{*}$ to be the immediate subproof of $\pi^{*}_{1}$ s.t.\ $\concof(\pi^{*}) = t$. 
        
        If $\pi^{*}_{1}$ does not end in a constructor, $s \in \nfintnonvars$, and hence $t \in \nfintconst$. Since $t$ is either non-standard or typed, we obtain a typed normal $\pi^{*}$ by applying $\rnrule$ on $\pi^{*}_{1}$ and $\pi^{*}_{2}$. 
        
        \item[$\rnrule = \rnxord$:] Again, if $t$ is standard and non-typed, we define $\pi^{*}$ as in the $\rnax$ case. 
        
        Otherwise we argue as follows. Suppose the immediate subproofs of $\pi$ are $\pi_{1}, \ldots, \pi_{n}$, with $\concof(\pi_{k}) = t_{k}$ for $k \leq n$. By IH, there are typed normal proofs $\pi^{*}_{1}, \ldots, \pi^{*}_{n}$ deriving $t_{1}, \ldots, t_{n}$. Since $\pi$ is normal, $t \neq t_{i}$ for any $i \leq n$, and $t_{i} \neq t_{j}$ for $i \neq j$. Each immediate subproof of each $\pi^{*}_{k}$ is also typed normal. Let $\{\chi_{1}, \ldots, \chi_{\ell}\}$ be the set of all subproofs $\pi^{*}_{k}$ whose last rule is not $\rnxor$, along with all immediate subproofs of every $\pi^{*}_{k}$ whose last rule is $\rnxor$. Note that each $\chi_{i}$ is typed normal, and its last rule is not $\rnxor$. If $\concof(\chi_{i}) = t$ for some $i \le \ell$, we define $\pi^{*} \coloneq \chi_{i}$, which is typed normal. Let $\chi$ be the proof obtained by applying $\rnxord$ on $\chi_{1}, \ldots, \chi_{\ell}$. We define $\pi^{*}$ to be the proof obtained by normalizing $\chi$ (the only proof transformation needed is to repeatedly remove pairs  $\chi_{i}, \chi_{j}$ where $i \neq j$ and $\concof(\chi_{i}) = \concof(\chi_{j})$).   
        
        So $\pi^{*}$ is normal by construction, has typed normal immediate subproofs, and its conclusion $t$ is either non-standard or typed, so $\pi^{*}$ is a typed normal proof of $\nf{X_{i}\intsub} \vdash t$, as desired.
        \qedhere 
    \end{description}
\end{proof}

\subsection{Small substitution \texorpdfstring{$\vintsub$}{}, and simulating derivations}\label{subsec:smallsub}
\begin{definition}\label{def:vintsub}
    $\vintsub$ is a substitution with $dom(\vintsub) = \dom(\intsub)$ and $\vintsub(x) \coloneq \zap{\intsub(x)}$ for all $x \in \dom(\intsub)$.
\end{definition}

\begin{lemma}\label{lem:sigmastart-zapt}
    For $t \in \constst$, $\nf{t\vintsub} = \zap{\nf{t\intsub}}$. 
\end{lemma}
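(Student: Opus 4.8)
The plan is to prove the identity by structural induction on $t \in \constst$, after recording two facts that make the induction well-posed. First, $\constst$ is defined as a subterm-closure, hence closed under subterms, so every immediate subterm of a $t \in \constst$ again lies in $\constst$ and the induction hypothesis is available for it. Second, every variable occurring in $\constst$ comes from one of the $r_{i}, s_{i}$ and therefore lies in $\varsof(\xi) = \dom(\intsub)$; consequently $t\intsub$ and $t\vintsub$ are ground for every $t \in \constst$, so that $\zap{\cdot}$ is well-defined on $\nf{t\intsub}$.

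The base cases are immediate. If $t = a \in \names$, both sides equal $a$. If $t = x \in \vars$, then $x \in \dom(\intsub)$, and since $\intsub$ is normalized while $\zap{\cdot}$ always returns a normalized term, the left side is $\nf{x\vintsub} = \nf{\zap{\intsub(x)}} = \zap{\intsub(x)}$ and the right side is $\zap{\nf{x\intsub}} = \zap{\intsub(x)}$.

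For the inductive step I split on whether $t$ is standard. Suppose $t = \func(t_{1}, \ldots, t_{n})$ is a standard non-atomic term (a pair, an encryption, or a public-key term). Pushing the substitution and the normal form through the constructor gives $\nf{t\vintsub} = \func(\nf{t_{1}\vintsub}, \ldots, \nf{t_{n}\vintsub})$ and $\nf{t\intsub} = \func(\nf{t_{1}\intsub}, \ldots, \nf{t_{n}\intsub})$, the latter being a standard normalized term. The crux is to rule out the zappable branch of Definition~\ref{def:zap}: because $t$ is standard, non-variable, and lies in $\constst$, we have $t \in \stnonvars$, so $\nf{t\intsub} \in \nfintnonvars$ is typed and hence not zappable. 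Thus $\zap{\nf{t\intsub}} = \nf{\func(\zap{\nf{t_{1}\intsub}}, \ldots, \zap{\nf{t_{n}\intsub}})}$, and applying the induction hypothesis $\zap{\nf{t_{j}\intsub}} = \nf{t_{j}\vintsub}$ to each argument makes the two sides coincide. This typedness argument is the only real obstacle, and it is precisely the point at which membership in $\constst$ is used essentially: outside $\constst$ one cannot guarantee that $\nf{t\intsub}$ is typed, and the zappable case would collapse the left side incorrectly.

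For the non-standard case $t = \Xor{M}$, normality of $t$ lets us write $M = \embrace{t_{1}, \ldots, t_{m}}$ as a set of distinct standard normalized terms, each a subterm of $t$ and hence in $\constst$. Since $\nf{\cdot}$ depends only on the $\xor$-equivalence class, distributing $\vintsub$ over $\xor$ gives $\nf{t\vintsub} = \nf{\nf{t_{1}\vintsub} \xor \cdots \xor \nf{t_{m}\vintsub}}$. On the other side, $\nf{t\intsub} = \nf{\nf{t_{1}\intsub} \xor \cdots \xor \nf{t_{m}\intsub}}$, and because a non-standard term is never zappable, Corollary~\ref{cor:normintsub}(2) yields $\zap{\nf{t\intsub}} = \nf{\zap{\nf{t_{1}\intsub}} \xor \cdots \xor \zap{\nf{t_{m}\intsub}}}$. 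The induction hypothesis on each $t_{j}$ then rewrites this as $\nf{\nf{t_{1}\vintsub} \xor \cdots \xor \nf{t_{m}\vintsub}}$, matching the left side and completing the induction.
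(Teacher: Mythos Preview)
Your argument follows the same structural induction as the paper's proof, and the base cases and the standard-constructor case are handled correctly. There is, however, a genuine gap in the non-standard case.

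You apply Corollary~\ref{cor:normintsub}(2) to the multiset $\{\nf{t_{1}\intsub}, \ldots, \nf{t_{m}\intsub}\}$, but that corollary is stated (and proved, via Lemma~\ref{lem:funcnormA}) only for multisets of \emph{standard} normalized terms. While your $t_{j}$ are standard, the terms $\nf{t_{j}\intsub}$ need not be: any $t_{j}$ that is a variable may be mapped by $\intsub$ to a non-standard term. The parity argument underlying Lemma~\ref{lem:funcnormA} relies on $\nf{\Xor{A}}$ being $\Xor{B}$ for a \emph{subset} $B \subseteq A$, which fails once $A$ contains non-standard elements that must first be flattened into their factors. So the step ``Corollary~\ref{cor:normintsub}(2) yields $\zap{\nf{t\intsub}} = \nf{\zap{\nf{t_{1}\intsub}} \xor \cdots \xor \zap{\nf{t_{m}\intsub}}}$'' is not justified as written.

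The fix is exactly what the paper's proof does: pass to the multiset $M = \Cup_{j \leq m}\factors(\nf{t_{j}\intsub})$ of standard normalized factors, apply Corollary~\ref{cor:normintsub}(2) to $M$, and then use the definition of $\zap{\cdot}$ on $\xor$-terms to reassemble $\zap{\nf{t_{j}\intsub}}$ from $\zap{\factors(\nf{t_{j}\intsub})}$. With that extra unfolding your chain of equalities goes through.
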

\begin{proof}
    We prove this by induction on the structure of $t$.
    \begin{description}
        \item[$t = x \in \vars$:] Since $x \in \constst$, $x \in \dom(\intsub)$. From Definition~\ref{def:vintsub} and from the fact that $\intsub$ and $\vintsub$ are normalized, we have that $\nf{x\vintsub} = \nf{\vintsub(x)} = \vintsub(x) = \zap{\intsub(x)} = \zap{\nf{\intsub(x)}} = \zap{\nf{x\intsub}}$. 
        \item[$t = a \in \names$:] Note that $\zap{a} = a$ is normalized. Thus $\nf{a\vintsub} = \nf{a} = a = \zap{a} = \zap{\nf{a}} = \zap{\nf{a\intsub}}$. 
        \item[$t = \sencfn{u}{v}$:] Since $u, v \in \subterms{\constst} \subseteq \constst$, we have $\nf{u\vintsub} = \zap{\nf{u\intsub}}$ and $\nf{v\vintsub} = \zap{\nf{v\intsub}}$ by IH. Since $t \in \stnonvars$, $\nf{t\intsub} \in \nfintnonvars$, and hence neither $t\intsub$ nor $\nf{t\intsub} = \sencfn{\nf{u\intsub}}{\nf{v\intsub}}$ is zappable. So we have the following: 
        {
            \setlength\tabcolsep{2pt}\def\arraystretch{1.2}
            \begin{longtable}{CLp{2em}l}
                  & \nf{\sencfn{u}{v}\vintsub} & & \\ 
                = & \nf{\nf{\sencfn{u}{v}\vintsub}} & & since $\nf{r} = \nf{\nf{r}}$ \\ 
                = & \nf{\nf{\sencfn{u\vintsub}{v\vintsub}}} & & expanding $\sencfn{u}{v}\vintsub$ \\
                = & \nf{\sencfn{\nf{u\vintsub}}{\nf{v\vintsub}}} & & expanding $\nf{\sencfn{r}{s}}$ \\ 
                = & \nf{\sencfn{\zap{\nf{u\intsub}}}{\zap{\nf{v\intsub}}}} & & by IH \\
                = & \zap{\sencfn{\nf{u\intsub}}{\nf{v\intsub}}} & & the term is not zappable \\
                = & \zap{\nf{\sencfn{u\intsub}{v\intsub}}} & & \\ 
                = & \zap{\nf{\sencfn{u}{v}\intsub}} & & \\
            \end{longtable}
        } 
        A similar argument works when $t = \pairfn{u}{v}$ or $t = \aencfn{u}{\pkfn{k}}$. 
        \item[$t = t_{1} \xor \cdots \xor t_{n}$:] Since $\{t_{1}, \ldots, t_{n}\} \subseteq \subterms{\constst} \subseteq \constst$, we have $\nf{t_{i}\vintsub} = \zap{\nf{t_{i}\intsub}}$ for all $i \leq n$, by IH. Let $M = \Cup_{i \leq n} \factors(\nf{t_{i}\intsub}) = \{u_{1}, \ldots, u_{\ell}\}$. Note that each $u_{j}$ is normalized and standard. We have the following chain of equations. 
        {
            \setlength\tabcolsep{2pt}\def\arraystretch{1.2}
            \begin{longtable}{CLp{2em}l}
                & \nf{(t_{1} \xor \cdots \xor t_{n})\,\vintsub} & & \\ 
                = & \nf{t_{1}\vintsub \xor \cdots \xor t_{n}\vintsub} & & expanding $\vintsub$ \\ 
                = & \nf{\nf{t_{1}\vintsub} \xor \cdots \xor \nf{t_{n}\vintsub}} & & $\xor$ theory \\ 
                = & \nf{\ \zap{\nf{t_{1}\intsub}} \xor \cdots \xor \zap{\nf{t_{n}\intsub}}\ } & & by IH \\ 
                = & \nf{\zap{\Xor{}\factors(\nf{t_{1}\intsub})} \xor \cdots \xor \zap{\Xor{}\factors(\nf{t_{n}\intsub})}} & & $\nf{r} = \Xor{}\factors(\nf{r})$ \\
                = & \nf{\nf{\Xor{}\zap{\factors(\nf{t_{1}\intsub})}} \xor \cdots \xor \nf{\Xor{}\zap{\factors(\nf{t_{n}\intsub})}}} & & def. of zap for $\xor$-terms \\
                = & \nf{(\Xor{}\zap{\factors(\nf{t_{1}\intsub})}\,) \xor \cdots \xor (\Xor{}\zap{\factors(\nf{t_{n}\intsub})}\,)} & & $\xor$ theory \\
                = & \nf{\Xor{}\Cup_{i\leq{}n}\zap{\factors(\nf{t_{i}\intsub})}} & & $\xor$ theory \\
                = & \nf{\zap{u_{1}} \xor \cdots \xor \zap{u_{\ell}}} & & def. of the $u_{j}$s \\ 
                = & \zap{\nf{u_{1} \xor \cdots \xor u_{\ell}}} & & by Corollary~\ref{cor:normintsub} \\ 
                = & \zap{\nf{\Xor{}\Cup_{i\leq{}n} \factors(\nf{t_{i}\intsub})}} & & def. of the $u_{j}$s \\
                = & \zap{\nf{\nf{t_{1}\intsub} \xor \cdots \xor \nf{t_{n}\intsub}}} & & $\xor$ theory \\
                = & \zap{\nf{t_{1}\intsub \xor \cdots \xor t_{n}\intsub}} & & $\xor$ theory \\ 
                = & \zap{\nf{(t_{1} \xor \cdots \xor t_{n})\,\intsub}} & & un-expanding $\intsub$. \\ 
            \end{longtable}
        }
    \end{description}
    Thus for all $t \in \constst$, $\nf{t\vintsub} = \zap{\nf{t\intsub}}$, as desired.
\end{proof}

\begin{theorem}\label{thm:simulate}
	For $i \leq n$ and any term $t \in \constst$, if $\nf{X_{i}\intsub} \DYderives \nf{t\intsub}$ then $\nf{X_{i}\vintsub} \DYderives \nf{t\vintsub}$.
\end{theorem}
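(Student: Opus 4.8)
The plan is to take a \emph{typed normal} derivation under $\intsub$ and lift it to a derivation under $\vintsub$ by applying $\zap$ to every node. Since $\nf{X_{i}\intsub} \DYderives \nf{t\intsub}$, Theorem~\ref{thm:rustur} supplies a typed normal proof $\pi$ of $\nf{X_{i}\intsub} \vdash \nf{t\intsub}$. I would then prove, by induction on the structure of $\pi$, the claim that for every subproof $\pi'$ of $\pi$ we have $\nf{X_{i}\vintsub} \DYderives \zap{\concof(\pi')}$. Applying the claim to the root and using Lemma~\ref{lem:sigmastart-zapt} (legitimate since $t \in \constst$), $\zap{\concof(\pi)} = \zap{\nf{t\intsub}} = \nf{t\vintsub}$, which is exactly the desired conclusion. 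For the base case, if $\lrof(\pi') = \rnax$ then $\concof(\pi') = \nf{w\intsub}$ for some $w \in X_{i} \subseteq \constst$, so Lemma~\ref{lem:sigmastart-zapt} gives $\zap{\concof(\pi')} = \nf{w\vintsub} \in \nf{X_{i}\vintsub}$ and the claim holds at once.

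The structural fact that drives every inductive case is that \emph{zappable terms can occur only as conclusions of constructors}. Indeed, in a typed normal proof the major (leftmost) premise of any $\rnsplit, \rnsdec, \rnadec$ node is standard and is the conclusion of a subproof whose last rule is not a constructor (by normality, condition~1); hence by Definition~\ref{def:typed-proofs} its conclusion is typed, i.e.\ lies in $\nfintnonvars$ and is not zappable. The same reasoning shows every standard axiom is typed. Consequently $\zap$ distributes over the major premise of each destructor, e.g.\ $\zap{\sencfn{u}{v}} = \sencfn{\zap{u}}{\zap{v}}$, and the minor premises are compatible (names zap to themselves; and a public key $\pkfn{k}$ occurring inside a non-zappable $\aencfn{u}{\pkfn{k}}$ is itself the normal form of an $\stnonvars$-term, since $\constst$ is subterm-closed, hence non-zappable, so $\zap{\aencfn{u}{\pkfn{k}}} = \aencfn{\zap{u}}{\pkfn{k}}$). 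Re-applying the same destructor rule to the inductively derivable zapped premises then yields $\zap{\concof(\pi')}$.

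For a constructor ($\rnpk, \rnpair, \rnsenc, \rnaenc$) there are two subcases. If $\concof(\pi')$ is not zappable, $\zap$ distributes over its top operator and I re-apply the same constructor to the zapped premises, derivable by the induction hypothesis. If $\concof(\pi')$ is zappable then $\zap{\concof(\pi')} = 0$ by Definition~\ref{def:zap}; since such a constructor has at least one premise $s_{1}$ with $\zap{s_{1}}$ derivable, $0 = \nf{\zap{s_{1}} \xor \zap{s_{1}}}$ is derivable by $\rnxor$ (the resulting proof need not be normal, which is harmless as we only need $\DYderives$). The remaining rule is $\rnxor$, with premises $s_{1}, \ldots, s_{m}$ and conclusion $s = \nf{s_{1} \xor \cdots \xor s_{m}}$: writing $s = \nf{\Xor{}(\Cup_{i}\factors(s_{i}))}$ and applying Corollary~\ref{cor:normintsub}(2) to the standard factors gives $\zap{s} = \nf{\zap{s_{1}} \xor \cdots \xor \zap{s_{m}}}$, so re-applying $\rnxor$ to the zapped premises produces $\zap{s}$.

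The main obstacle is precisely the bookkeeping that keeps zappable terms out of destructor major premises and out of axioms; everything else follows routinely once this is in place, and it is exactly what typedness (Theorem~\ref{thm:rustur}) together with normality buy us. Two smaller wrinkles to discharge with care are (i) the well-formedness of the $\rnadec$ step after zapping, which needs the public-key argument above so that the zapped ciphertext is a genuine asymmetric encryption still decryptable by $k$; and (ii) the possibly non-standard premises of $\rnxor$, which I reduce to the standard case by passing through $\factors$ so that Corollary~\ref{cor:normintsub}(2), stated for standard normalized terms, continues to apply.
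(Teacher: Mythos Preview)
Your proposal is correct and follows essentially the same route as the paper: obtain a typed normal proof via Theorem~\ref{thm:rustur}, induct on its structure, and show that applying $\zap{\cdot}$ to each node yields a valid derivation from $\zap{\nf{X_i\intsub}} = \nf{X_i\vintsub}$. The only cosmetic differences are that in the zappable-constructor case you derive $0$ via $\rnxor$ on a repeated premise (the paper instead assumes $0 \in X$ and invokes $\rnax$), and you spell out the $\rnadec$ compatibility argument that the paper leaves implicit.
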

\begin{proof}
    By Theorem~\ref{thm:rustur}, there is a typed normal proof of $\nf{X_{i}\intsub} \vdash \nf{t\intsub}$. Since $X_{i} \subseteq \constst$, it follows from Lemma~\ref{lem:sigmastart-zapt} that $\nf{X_{i}\vintsub \cup \{t\vintsub\}} = \zap{\nf{X_{i}\intsub}} \cup \{\zap{\nf{t\intsub}}\}$. Thus it suffices to prove that $\zap{\nf{X_{i}\intsub}} \vdash \zap{\nf{t\intsub}}$. 

    We will prove a more general fact -- for any typed normal proof $\pi$ of $X \vdash t$ with $0 \in X$ and $t$ normalized, there is a proof of $\zap{X} \vdash \zap{t}$. We prove this by induction on the structure of $\pi$. Assume that for all proper subproofs $\delta$ of $\pi$, if $\concof(\delta) = r$, there is a proof of $\zap{X} \vdash \zap{r}$. Let $\lrof(\pi) = \rnrule$.
    \begin{description}
		\item[$\rnrule = \rnax$:] $t \in X$, and therefore $\zap{t} \in \zap{X}$. Thus $\zap{X} \DYderives \zap{t}$ by $\rnax$. 
		\item[$\rnrule$ is a constructor rule other than $\rnxor$:] Let $t = \func(t_{1}, t_{2})$ and let $\pi_{1}, \pi_{2}$, with $\concof(\pi_{i}) = t_{i}$, be the immediate subproofs of $\pi$. By IH, there are proofs $\delta_{1}$ and $\delta_{2}$ of $\zap{X} \vdash \zap{t_{1}}$ and $\zap{X} \vdash \zap{t_{2}}$. If $t$ is zappable, then $\zap{t} = 0 \in \zap{X}$ ($0 \in X$, so $0 = \zap{0} \in \zap{X}$), and we have $\zap{X} \DYderives \zap{t}$ using $\rnax$. If $t$ is not zappable, then $\zap{t} = \nf{\func(\zap{t_{1}}, \zap{t_{2}})} = \func(\zap{t_{1}}, \zap{t_{2}})$, and we can apply $\rnrule$ on $\delta_{1}$ and $\delta_{2}$ to get $\zap{X} \DYderives \zap{t}$.
		\item[$\rnrule$ is a destructor other than $\rnxor$:] Let the immediate subproofs of $\pi$ be $\pi_{1}$ and $\pi_{2}$, deriving $t_{1}$ and $t_{2}$ respectively, with the standard term $t_{1}$ being the major premise, and $t$ an immediate subterm of $t_{1}$. Since $\pi$ is typed normal, $\pi_{1}$ is also typed and ends in a destructor, so by Definition~\ref{def:typed-proofs}, $t_{1} \in \nfintnonvars$. By definition, $t_{1}$ is not zappable, so $\zap{t_{1}}$ has the same outermost operator as $t_{1}$. By IH, there are proofs $\delta_{1}$ and $\delta_{2}$ of $\zap{X} \vdash \zap{t_{1}}$ and $\zap{X} \vdash \zap{t_{2}}$. Since $\zap{t_{1}}$ is not atomic, we can apply the destructor $\rnrule$ on $\delta_{1}$ and $\delta_{2}$ to get $\zap{X} \DYderives \zap{t}$. 
		\item[$\rnrule = \rnxor$:] Let the immediate subproofs of $\pi$ be $\pi_{1}, \ldots, \pi_{n}$, deriving $t_{1}, \ldots, t_{n}$ respectively. Note that each $t_{i}$ is normalized and $\pi$ itself derives $t = \nf{t_{1} \xor \cdots \xor t_{n}}$. By IH, there are proofs $\delta_{1}, \ldots, \delta_{n}$, each $\delta_{i}$ deriving $\zap{X} \vdash \zap{t_{i}}$. Applying $\rnxor$ on these gives a proof $\delta$ of $\zap{X} \vdash \nf{\zap{t_{1}} \xor \cdots \xor \zap{t_{n}}}$. Letting $M = \factors(t_{1} \xor \cdots \xor t_{n})$, we have the following chain of equations.
        {
            \arraycolsep=2pt\def\arraystretch{1.1}
            \begin{longtable}{CCLl}
                \zap{t} & = & \zap{\nf{t_{1} \xor \cdots \xor t_{n}}} & \\
                & = & \zap{\nf{\Xor{M}}} & all elements of $M$ are standard \& normalized \\
                & = & \nf{\Xor{\zap{M}}} & by Corollary~\ref{cor:normintsub} \\ 
                & = & \nf{\nf{\Xor{\zap{\factors(t_{1})}}} \xor \cdots \xor \nf{\Xor{\zap{\factors(t_{n})}}}} & $\xor$ theory \\
                & = & \nf{\zap{t_{1}} \xor \cdots \xor \zap{t_{n}}} & def. of zap for $\xor$ terms \\ 
            \end{longtable}
        }
        Thus we see that $\delta$ is indeed a proof of $\zap{X} \vdash \zap{t}$, as desired, and this completes the proof.
        \qedhere
	\end{description}
\end{proof}

Recall that for any substitution $\lambda$, $\size{\lambda} = |\subterms{\{\lambda(x) \mid x \in \dom(\lambda)\}}|$. The following lemma is useful in showing that $\vintsub$ is \defemph{small}, in the sense that $\size{\vintsub} \leq |\constst|$.

\begin{lemma}\label{lem:stzapnfintsub}
    For every $u \in \zap{\nfintconst}$, $\subterms{u} \subseteq \zap{\nfintconst}$. 
\end{lemma}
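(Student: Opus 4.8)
The plan is to prove subterm-closure of $\zap{\nfintconst}$ by induction. By Lemma~\ref{lem:sigmastart-zapt} we have $\zap{\nfintconst} = \nfvintconst$, so I am equivalently showing that $\nfvintconst$ is subterm-closed; I will work with the representation $u = \zap{v}$ where $v = \nf{t\intsub} \in \nfintconst$ for some $t \in \constst$. I proceed by strong induction on $\size{u}$. Since $u = \zap{v}$ is already in $\zap{\nfintconst}$, and since every proper subterm of $u$ lies inside some immediate subterm of $u$ (a child, in the standard and $\pkfn{k}$ cases, or a factor, in the non-standard case), it suffices to prove the claim that \emph{every immediate subterm of $\zap{v}$ is again of the form $\zap{v^{*}}$ for some $v^{*} \in \nfintconst$}. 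Granting the claim, each immediate subterm is in $\zap{\nfintconst}$ and has strictly smaller dag-size than $u$, so the induction hypothesis gives that all of \emph{its} subterms lie in $\zap{\nfintconst}$ as well; combined with $u$ itself this yields $\subterms{u} \subseteq \zap{\nfintconst}$.

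To prove the claim I case on the top structure of $u = \zap{v}$, using how $\zap$ acts (atomic terms are preserved, a zappable standard term is sent to $0$, a surviving standard term $\func(\ldots)$ keeps its operator with zapped children, and a non-standard term becomes the normal form of the $\xor$ of the zaps of its factors, by Corollary~\ref{cor:normintsub}). The engine of the argument is the observation that if a non-atomic standard term $w$ \emph{survives} zapping, i.e.\ $\zap{w} \neq 0$, then by Definition~\ref{def:zappable} $w$ is not zappable, hence typed, so $w \in \nfintnonvars \subseteq \nfintconst$. Writing $w = \nf{t^{*}\intsub}$ with $t^{*} \in \stnonvars$, and noting that $t^{*}$ is a non-variable standard term (hence shares its outermost operator with $w$ and has all its immediate subterms in $\constst$, as $\constst$ is subterm-closed), the immediate subterms of $w$ are exactly the terms $\nf{t^{*}_{i}\intsub} \in \nfintconst$. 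This settles the case where $u = \pkfn{k}$ (the key comes from $t^{*} = \pkfn{k} \in \constst$), the case where $u$ is a pair or encryption (its children are zaps of elements of $\nfintconst$, and a child that is itself zappable merely contributes $0 = \zap{(\text{that child})}$, still of the required form), and, when $v$ is non-standard, the \emph{non-atomic} factors of $u$ (each is $\zap{f}$ for a surviving, hence typed, factor $f$ of $v$, so $f \in \nfintconst$). In particular $0$ needs no separate handling: wherever it appears it is the zap of a zappable term that the preceding argument has already placed in $\nfintconst$.

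The one remaining — and, I expect, genuinely delicate — case is an \emph{atomic} immediate subterm, most notably an atomic factor of a non-standard $u = \zap{v}$; such atoms originate from names or keys buried inside $\intsub(x)$ for the run variables $x \in \constst$, and are preserved verbatim by $\zap$. Here I must exhibit $v^{*} \in \nfintconst$ with $\zap{v^{*}}$ equal to the atom $a$ in question, and (unlike the non-atomic case) there is no surviving typed ancestor forcing $a$ into $\constst$. The clean route is to establish first that \emph{every name and key occurring anywhere in $\nfintconst$ already occurs in $\constst$}: this holds because $\constst$ is subterm-closed and, by its definition, contains the intruder's initial knowledge $X_{0}$, all honest-role material $X_{i}\honsub_{i}$, and the run terms $r_{1},s_{1},\ldots,r_{n},s_{n}$, so instantiation by $\intsub$ introduces no new atom. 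Given $a \in \constst$ we have $a = \nf{a\intsub} \in \nfintconst$ and $\zap{a} = a$, whence $a = \zap{a} \in \zap{\nfintconst}$ and we take $v^{*} = a$. Verifying this atom-tracking fact is the main obstacle; everything else is the routine matching of the $\zap$ recursion against the subterm-closure of $\constst$ used above.
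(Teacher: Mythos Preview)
Your non-atomic cases are sound and track the paper's argument: once $u = \zap{v}$ is standard and nonzero, the witness $v$ (or the relevant factor of $v$, in the non-standard case) is forced to be non-zappable, hence typed, hence of the form $\nf{t^{*}\intsub}$ for some $t^{*} \in \stnonvars$ whose children lie in $\constst$ --- and the children of $u$ are then zaps of elements of $\nfintconst$, as required.

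The atomic-factor case, which you correctly isolate as the delicate one, is where the argument breaks. The claim that ``instantiation by $\intsub$ introduces no new atom'' is simply false: $\intsub$ is an arbitrary normalized ground substitution constrained only by the derivability conditions of Definition~\ref{def:protruns}, and $\xor$-cancellation lets those be satisfied with fresh names present. For a concrete witness, take a role that receives $x \xor y$ (so $x, y, x \xor y \in \constst$), let $X_{0} = \{0\}$, and set $\intsub(x) = \intsub(y) = b \xor c$ for names $b, c$ appearing nowhere in $\constst$. Then $\nf{(x \xor y)\intsub} = 0$ is derivable from $\{0\}$, the run is valid, and $b \xor c = \nf{x\intsub} \in \nfintconst$ with $\zap{b \xor c} = b \xor c$; but its factor $b$ lies in neither $\constst$ nor $\zap{\nfintconst}$. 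So your atom-tracking fact fails --- and indeed this same instance falsifies the lemma as stated. The paper's own proof has the identical blind spot: in its non-standard case it concludes that any factor $u_{i} \notin \nfintconst$ ``is zappable'', silently assuming $u_{i}$ is non-atomic. A correct treatment must either first normalize $\intsub$ to use only names from $\constst$ (arguing this is without loss of generality for the existence of an attack), or weaken the lemma to a statement still sufficient for Theorem~\ref{thm:vintsub-small}.
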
 
\begin{proof}
    The proof is by induction on the structure of $u$. The following cases arise.
    \begin{description}
        \item[$u = a \in \names$:] If $a \in \zap{\nfintconst}$, $\subterms{a} = \{a\} \subseteq \zap{\nfintconst}$. 
        
        \item[$u = \func(v,w)$ is standard:] Suppose $t \in \constst$ s.t.\ $u = \zap{\nf{t\intsub}}$. Therefore $\nf{u} = u$ and $\zap{u} = u$. Suppose $u \notin \zap{\nfintnonvars}$. This means that $u$ is zappable. But $u$ is already of the form $\zap{\nf{t\intsub}}$ for some $t \in \constst$, so it has to be the case that $u = \zap{u} = 0$, which is a contradiction.
        
        So there is some standard $p \in \stnonvars$ s.t.\ $u = \zap{\nf{p\intsub}}$. Then $p = \func(r,s)$ for some $r, s \in \constst$. Since $v = \zap{\nf{r\intsub}}$ and $w = \zap{\nf{s\intsub}}$, by IH, we have $\subterms{v} \cup \subterms{w} \subseteq \zap{\nfintconst}$. Since $u \in \zap{\nfintconst}$ too, we have that $\subterms{u} \subseteq \zap{\nfintconst}$. 
        
        \item[$u = u_{1} \xor \cdots \xor u_{n}$ where each $u_{i}$ is standard:] Suppose $t \in \constst$ s.t.\ $u = \zap{\nf{t\intsub}}$. Therefore $\nf{u} = \zap{u} = u$, and for each $i \leq n$, $\nf{u_{i}} = \zap{u_{i}} = u_{i}$. Consider $u_{i}$ for any $i \leq n$. If $u_{i} \notin \nfintconst$, we have $\nf{u_{i}} = u_{i} \notin \nfintnonvars$, which means that $u_{i}$ is zappable. Hence $u_{i} = \zap{u_{i}} = 0 \in \nfintconst$, which is a contradiction. Hence for every $i \leq n$, there is some $u_{i} \in \zap{\nfintconst}$. Since each $u_{i}$ is a proper subterm of $u$, we can apply IH and conclude that $\subterms{u_{i}} \subseteq \zap{\nfintconst}$ for each $i \leq n$. Since $u \in \zap{\nfintconst}$ too, we have that $\subterms{u} \subseteq \zap{\nfintconst}$. 
        \qedhere
    \end{description}
\end{proof}

\begin{theorem}\label{thm:vintsub-small}
    $\vintsub$ is small.
\end{theorem}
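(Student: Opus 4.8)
The plan is to show that the range of $\vintsub$ lies inside the subterm-closed set $\zap{\nfintconst}$, and then to bound the cardinality of that set by $|\constst|$. Recall that $\size{\vintsub} = |\subterms{\{\vintsub(x) \mid x \in \dom(\vintsub)\}}|$, so it suffices to establish two things: that $\subterms{\{\vintsub(x) \mid x \in \dom(\vintsub)\}} \subseteq \zap{\nfintconst}$, and that $|\zap{\nfintconst}| \leq |\constst|$.

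First I would locate the values of $\vintsub$ inside $\zap{\nfintconst}$. Every $x \in \dom(\vintsub) = \dom(\intsub) = \varsof(\xi)$ occurs as a subterm of some $r_{i}$ or $s_{i}$, and hence $x \in \constst$. Since $\intsub$ is normalized, $\vintsub(x) = \zap{\intsub(x)} = \zap{\nf{x\intsub}}$, and because $x \in \constst$ we have $\nf{x\intsub} \in \nfintconst$; therefore $\vintsub(x) \in \zap{\nfintconst}$. Thus $\{\vintsub(x) \mid x \in \dom(\vintsub)\} \subseteq \zap{\nfintconst}$. Next I would invoke Lemma~\ref{lem:stzapnfintsub}, which says that $\zap{\nfintconst}$ is closed under taking subterms: for each $u \in \zap{\nfintconst}$, $\subterms{u} \subseteq \zap{\nfintconst}$. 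Combining these two facts, $\subterms{\{\vintsub(x) \mid x \in \dom(\vintsub)\}} = \bigcup_{x}\subterms{\vintsub(x)} \subseteq \zap{\nfintconst}$, which already gives $\size{\vintsub} \leq |\zap{\nfintconst}|$.

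It then remains to bound $|\zap{\nfintconst}|$. Here the map $t \mapsto \zap{\nf{t\intsub}}$ is a surjection of $\constst$ onto $\zap{\nfintconst}$: it factors as the surjection $t \mapsto \nf{t\intsub}$ from $\constst$ onto $\nfintconst$ followed by the map $\zap{\cdot}$. Hence $|\zap{\nfintconst}| \leq |\nfintconst| \leq |\constst|$, and chaining the inequalities yields $\size{\vintsub} \leq |\zap{\nfintconst}| \leq |\constst|$, i.e.\ $\vintsub$ is small. The substantive content has already been discharged by Lemma~\ref{lem:stzapnfintsub} (subterm-closure of $\zap{\nfintconst}$), so the remaining argument is essentially bookkeeping. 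The only points that require care—and thus the closest thing to an obstacle—are verifying that the variables in $\dom(\intsub)$ genuinely lie in $\constst$ (so that their zapped images fall in $\zap{\nfintconst}$), and treating $\zap{\nfintconst}$ as a set rather than a multiset when counting; both are routine once the two preceding lemmas are in place.
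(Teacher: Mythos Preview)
Your proposal is correct and follows essentially the same approach as the paper's proof: both locate each $\vintsub(x)$ inside $\zap{\nfintconst}$, invoke Lemma~\ref{lem:stzapnfintsub} for subterm closure, and then bound $|\zap{\nfintconst}|$ by $|\constst|$ via the obvious chain of surjections. If anything, you are slightly more explicit than the paper about why $x \in \constst$ (the paper leaves this implicit), but the argument is the same.
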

\begin{proof}
    For every $x \in \dom(\intsub) = \dom(\vintsub)$, $\vintsub(x) = \zap{\intsub(x)} = \zap{\nf{\intsub(x)}} \in \zap{\nfintconst}$, by definition of $\vintsub$ and the fact that $\intsub$ is normalized. Applying Lemma~\ref{lem:stzapnfintsub}, we have $\subterms{\vintsub(x)} \subseteq \zap{\nfintconst}$ for all $x \in \dom(\vintsub)$, and hence $\subterms{\{\vintsub(x) \mid x \in \dom(\vintsub)\}} \subseteq \zap{\nfintconst}$. But then it is clear that 
    \[
        \size{\vintsub} = |\subterms{\{\vintsub(x) \mid x \in \dom(\vintsub)\}}| \leq |\zap{\nfintconst}| \leq |\nfintconst| \leq |\intconst| \leq |\constst|. \qedhere
    \]  
\end{proof}

\subsection{An NP decision procedure for \texorpdfstring{$B$-}{}bounded insecurity}
Fix $K$. Given a protocol $\prot$, guess a set $\sessions$ of $K$ pairwise coherent sessions of $\prot$. Each session is a role along with a substitution $\honsub$ for the agent variables s.t.\ honest agent sends are derivable after applying $\honsub$. Notice that each such $\honsub$ is small, since $\honsub(x) \in \names \cup \vars$, and so $\size{\honsub} \leq |\constst|$. There are polynomially many derivability checks, each of which can be performed in time polynomial in $|\constst|$. We then check the coherence of this set of sessions. 

Finally, we guess a run $(\xi, \vintsub)$, where $\xi$ is a prefix of an interleaving of the sessions in $\sessions$, and $\vintsub$ is a small substitution. The length $n$ of $\xi$ is also polynomial in the input size, and therefore there are polynomially many derivability checks for the intruder, each solvable in polynomial time. Thus we have an NP procedure to check if a given protocol has an attack consisting of $K$ sessions.

\end{document}